\newenvironment{dedication}
  {
   \thispagestyle{empty}
   \vspace*{\stretch{1.4}}
   \itshape             
   \raggedleft          
  }
  {\par 
   \vspace{\stretch{3}} 
   \clearpage           
  }
\DeclarePairedDelimiter\floor{\lfloor}{\rfloor}
\newtheorem{theorem}{Theorem}[section]
\newtheorem{lemma}[theorem]{Lemma}
\def\be{\begin{equation}}
\def\ee{\end{equation}}
\def\bea{\begin{align}}
\def\eea{\end{align}}
\def\xij{X_{ij}}
\def\txij{\tilde{X}_{ij}}
\begin{document}

\baselineskip 24pt

\begin{center}

{\Large \bf Towards  Positive Geometry of Multi Scalar Field Amplitudes\\
\large Accordiohedron and Effective Field Theory}


\end{center}

\vskip .6cm
\medskip

\vspace*{4.0ex}

\baselineskip=18pt

\centerline{\large \rm Mrunmay Jagadale$^{a}$ and Alok Laddha$^{b}$}

\vspace*{4.0ex}

\centerline{\large \it ~$^a$ California Institute of Technology, Pasadena, CA 91125, USA}

\centerline{\large \it ~$^b$Chennai Mathematical Institute, Siruseri, Chennai, India}
\centerline{\large \it  Siruseri, SIPCOT IT Park}


\vspace*{1.0ex}
\centerline{\small E-mail:  mjagadal@caltech.edu, aladdha@cmi.ac.in}

\vspace*{5.0ex}

\centerline{\bf Abstract} \bigskip
The geometric structure of S-matrix encapsulated by the ``Amplituhedron program" has begun to reveal itself even in non-supersymmetric quantum field theories. Starting with the seminal work of Arkani-Hamed, Bai, He and Yan \cite{abhy1711} it is now understood that for a wide class of scalar quantum field theories, tree-level amplitudes are canonical forms associated to polytopes known as accordiohedra. Similarly the higher loop scalar integrands are canonical forms associated to so called type-D cluster polytopes for cubic interactions or  recently discovered class of polytopes termed pseudo-accordiohedron for higher order scalar interactions.

\vspace*{-0.2in}
In this paper, we continue to probe the universality of these structures for a wider class of scalar quantum field theories. More in detail, we discover new realisations of the associahedron in planar kinematic space whose canonical forms generate (colour-ordered) tree-level S matrix of external massless particles with $n-4$ massless poles and one massive pole at $m^{2}$.  The resulting amplitudes are associated to  $\lambda_{1}\, \phi_{1}^{3}\, +\, \lambda_{2}\, \phi_{1}^{2}\phi_{2}$ potential  where $\phi_{1}$  and $\phi_{2}$ are massless and massive scalar fields with bi-adjoint colour indices respectively.  We also show how in the ``decoupling limit" (where $m\, \rightarrow\, \infty, \lambda_{2}\, \rightarrow\, \infty$  such that $g\, :=\, \frac{\lambda_{2}}{m}\, =\, \textrm{finite}$) these associahedra  project onto a specific class of accordiohedron which are known to be positive geometries of amplitudes generated by $\lambda\, \phi_{1}^{3}\, +\, g\, \phi_{1}^{4}$.

\begin{dedication}
Dedicated to the memory of Nila;\\ Teacher, Mentor and Friend. 
\end{dedication}

\vfill \eject

\baselineskip 18pt

\tableofcontents



\section{Introduction}
The analysis of the structure of the S-matrix  has witnessed several striking developments in past two decades. But the  scattering amplitudes associated to scalar field theories display a ``dual persona" in many of these developments. On one hand, S-matrix of bi-adjoint scalar $\phi^{3}$ theory is a prototype for many of the more sophisticated theories such as Yang-Mills theory or Gravity\footnote{In fact the bi-adjoint $\phi^{3}$ scattering amplitude also plays the role of a building block in the so-called KLT relations}, and on the other hand the  scattering amplitudes associated to higher order scalar interactions are more intricate and do not appear to share the remarkable simplicity and elegance of the amplitudes associated to cubic interactions.

BCFW recursion relations provided first hints of such intricacies as it is only the cubic scalar interactions which are BCFW constructible using a single BCFW shift \cite{nima2008,bofeng}. However the ``brutality" of generic scalar field amplitudes manifested itself most clearly in the CHY (Cachazo He, and Yuan) formulation of the S-matrix. As was shown by CHY in \cite{chy1} and later in a series of works by Bourjaily et al \cite{damgaard1, damgaard2} the integrand for the bi-adjoint $\phi^{3}$ theory is the canonical top form in the moduli space but the integrands for other scalar theories do not appear to have any obvious geometric or cohomological characterisation. (The amplituhedron program in fact motivates us to look for lower forms on CHY moduli space or top forms on binary geometries \cite{nima-binary}, \cite{song-binary} as integrands for $\phi^{p}$ theories. For some initial attempts in the first direction we refer the reader to \cite{mrunmay1911}).

The extension of the Amplituhedron program to non super-symmetric quantum field theories and the discovery of positive geometries (more specifically convex polytopes) in kinematic space \cite{abhy1711} has shed new light on some of these issues. We now understand  how the  tree-level (and colour ordered) amplitudes of a generic massless scalar field theories are directly tied to the existence of certain very specific Positive geometries (in fact Polytopes) in the kinematic space.\footnote{Even perturbative string  amplitudes have intimate relationship with positive geometries and the associated canonical forms as discovered in the seminal paper by Mizera \cite{mizera1706}.}
The ``Amplituhedron" for a $\phi^{p}$ interaction for a generic $p$ is a combinatorial polytope known as the accordiohedron which admits convex realisations in kinematic space . \cite{pppp, pinni1810, prashanth, mrunmay1911, mrunmay3, songmatter}. Each accordiohedron defines a unique canonical form in kinematic space and a (weighted) sum over all the canonical forms of a given dimension is the $n$-point amplitude of $\phi^{p}$ theory.\footnote{Equivalently, each accordiohedron is dual to a simple polytope and the canonical form associated to the accordiohedron induces a volume measure on the kinematic space such that the sum over volumes of all the dual polytopes of a given dimension equals the tree-level amplitude.}

However, even for tree-level amplitudes, a number of puzzles remain unresolved. For an $n$ particle colour-ordered amplitude in bi-adjoint scalar $\phi^{3}$ theory, there is a unique $n-3$ dimensional associahedron and hence the amplitude is nothing but the canonical top form on the associahedron \cite{abhy1711}. For a generic scalar interaction,  even for a fixed number of external particles, there is a whole set of acccordiohedra and they all need to be accounted for (the forms associated to each of these have to be added with specific coefficients) when evaluating the tree-level amplitude. Thus the scattering form for $\phi^{p\, >\, 3}$ theory is not a d-$\log$ form but a weighted sum over $k\, >\, 1$ d-$\log$ forms.

The weights  are uniquely fixed via combinatorial properties of the accordiohedron (\cite{kojima, srivastava}) but it is not apriori clear why a specific linear combination of the d $\log$-forms is ``special" in the sense that it generates unitary and local scattering amplitudes. In fact, as was observed in \cite{mrunmay3}, if we consider the accordiohedra polytopes for polynomial scalar interactions, then a class of such accordiohedra contribute with vanishing weights and it has till date remained unclear why certain accordiohedra are redundant as far as the tree-level S-matrix is concerned.

 It has been advocated by Nima Arkani-Hamed, that reason for emergence of accordiohedron polytopes (as the amplituhedron for higher order scalar interaction) should be probed via effective field theory ideas. We can think of generating polynomial scalar interactions from cubic interactions by integrating out massive fields. In this sense Accordiohedra polytopes should ``naturally" emerge from the kinematic space Associahedra in $X_{ij}\, <<\, m^{2}$ limit.\footnote{We also thank Nemani Suryanarayana for highlighting this possibility to us in 2018.} 

In this paper, we attempt to resolve few pieces of this puzzle. We take first steps towards showing that there is an ``amplituhedron" for tree-level S matrix in which the interaction Hamiltonian includes coupling of two scalars (one of which is massive and the other massless). More in detail, we show that there exists a class of polytopes  which are topologically equivalent to an associahedron but whose boundaries come in two possible colours. We refer to such a polytope as an associahedron block. We show that an associahedron block admits realisations in kinematic space whose canonical form generates a set of tree level amplitudes of $\lambda_{1}\, \phi_{1}^{3} + \lambda_{2}\, \phi_{1}^{2}\phi_{2}$ theories, where $\phi_{1},\, \phi_{2}$ are two species of scalar fields. The set consists of $n$-point amplitudes in which all the external states are massless and the amplitude is expanded upto $\lambda^{2}$.  We distinguish the two fields by taking $\phi_{1}$ to be massless and $\phi_{2}$ to be massive.  We then show that there is a way to  ``geometrize" the effect of integrating out massive modes to leading order in $\frac{1}{m}$ ($m$ being mass of $\phi_{2}$)  on positive geometries and show that this leads to the accordiohedron polytope 

\emph{Our primary result in the paper is the following} : Up to order $\lambda_{2}^{2}$ perturbative expansion of the $n$ point (colour-ordered) tree-level amplitude in which all the external particles are massless  is a sum over canonical forms of a set of  polytopes which are realisations of associahedra in the positive region of kinematic space. We derive a formula which computes the (color-ordered) amplitude for the multi-scalar theory in terms of canonical forms associated to ABHY-type realisations of associahedron blocks.

In particular, let $A_{n-3}$ be an ABHY associahedron in the positive region of planar kinematic space, ${\cal K}_{n}^{+}$. As we show below, for each $(i,j)$ such that $\{\, \vert i - j\vert$ modulo $n\, \in\, 1,\, \dots,\, \frac{n}{2}\, \}$ there exists a polytope that we call associahedron block and denote as $A_{n-3}^{(i,j)}$. Each associahedron-block has a subset of co-dimension one facets which are associated with $(m,n)\, \in\, \{\, (i,j),\, \dots,\, i + (j - i) - 1, j + (j - i) - 1)\, \}$ that are coloured red while all the other facets being coloured black.

We then show that the causal structure on ${\cal K}_{n}$ introduced in \cite{nima1912} has enough richness which we use to generalise the ABHY construction. We thus obtain convex realisation of an associahedron block $A_{n-3}^{(i,j)}$ in ${\cal K}_{n}^{+}$ and we denot this (ABHY) realisation as $A_{n-3}^{{\cal F}_{ij}}$. Here ${\cal F}_{ij}$ is the set of all the triangulations of an $n$-gon such that (i) Each triangulation has at most one  red diagonal from the set $\, \{\, (i,j), \dots,\, (i + \vert j - i\vert -1 , j + \vert j - i\vert\, - 1\, \}$. We then prove the following theorem.

Let $\Omega_{n}^{{\cal F}_{ij}}$ be the planar scattering form defined by the associahedron block $A_{n-3}^{(i,j)}$ with $\Omega_{n}^{{\cal F}_{ij}}(A_{n-3}^{ij})$ being its pull-back on the \emph{unique} convex realisation of the same associahedron block in ${\cal K}_{n}^{+}$. Similarly, let $\Omega_{n}^{\phi^{3}}$ be the planar scattering form defined in \cite{abhy1711} with $\Omega_{n}^{\phi^{3}}(A_{n-3})$ being its pull-back on a ABHY realisation.

The scattering form defined below defines the $n$-point amplitude of our theory upto sub-leading order in $\lambda_{2}$.
\begin{flalign}\label{master0}
\Omega_{n}^{Y}\, :=\, \left[\, \sum_{\vert i  - j \vert = 2}^{\frac{n}{2}}\, \frac{1}{\vert i - j \vert}\, \sum_{{\cal F}_{ij}}\, \Omega_{n}^{{\cal F}_{ij}}(A_{n-3}^{ij}) -\, \gamma\, \Omega_{n}^{\phi^{3}}(A_{n-3}) \right]
\end{flalign}
where 
\begin{flalign}
\gamma\, =\, \sum_{\textrm{Sum over all diagonals, (i,j)}}\, \frac{1}{\vert i - j\vert}\, -\, (n-3)
\end{flalign}
Hence, the following (sum over) $d-\log$ forms,
\begin{flalign}\label{master1}
\omega_{n}\, =\, \lambda_{1}^{n-2}\, \Omega_{n}^{\phi^{3}}(A_{n-3})\, +\, \lambda_{1}^{n-4}\, \lambda_{2}^{2}\, \Omega_{n}^{Y}
\end{flalign}
generates the tree-level planar amplitude up to order $\lambda_{2}^{2}$.

We then show that the effect of integrating out the massive $\phi_{2}$ field is geometrized in the world of positive geometries. It amounts to moving the associahedra to ``infinity" in different directions where they degenerate into a family of lower dimensional accordiohedra.

The paper is organised as follows. In section \ref{pgpsi}, we review the analysis of tree-level amplitudes for color ordered massless scalar theories with polynomial interactions . Although the main ingredients in this section are a review, we give an explicit evaluation of generic tree-level amplitudes generated by polynomial (massless) scalar interaction in terms of canonical forms in section \ref{asdln}. In sections \ref{bsft} and \ref{fgdt}, we introduce a two-scalar field theory with Lagrangian defined in equation \eqref{L1} and analyse the triangulations dual to Feynman diagrams. We argue how the naive attempt at using these dual triangulations to generate positive geometries such as associahedron fail if the number of particles $n\, \geq\, 6$.

In section \ref{ccdcp}, we use the remarkable causal structure in the  planar kinematic space introduced in \cite{nima1912} to locate convex polytopes  whose canonical forms generate tree-level perturbative amplitude of the two-scalar theory. As we prove in \ref{brcd}, combinatorially all the polytopes are in fact associahedra such that a unique linear combination over $d\log$ forms associated to their convex realisation is the scattering amplitude of interest. We show how these associahedra have boundaries some of which correspond to massless poles and some to the massive ones. In section \ref{sfeft}, we consider a low energy limit of these associahedra and show that they in fact project onto accordiohedra which are known to be positive geometries for the effective field theory that arises once we integrate out the massive field. We end with a discussion of some of the immediate open questions. 
\section{Positive Geometries for polynomial scalar interactions.}\label{pgpsi}
In this section, we review the ``Amplituhedron program" in the context of tree-level (and colour-ordered) scattering amplitudes of massless scalar theories with generic polynomial interaction.

We first begin by quickly reviewing the positive geometries, more specifically a class of simple polytopes known as accordiohedra (a polytope is a bounded, convex, higher dimensional generalisation of a polygon) which generate amplitudes for monomial ($\phi^{p}$.) interactions. A polytope ${\cal P}$ is called simple if each of the vertex is adjacent to exactly $\textrm{dim}({\cal P})$ co-dimension one boundaries called facets. Accordiohedron is a simple polytope whose co-dimension $k$ faces are in bijection with a set of dissections of polygon. Depending on the nature of dissections, that is, if the polygon is dissected into triangles or $p\, >\, 3$-gons, we get distinct accordiohedra. If the co-dimension $k$ faces of the simple polytope are in bijection with $k$-partial triangulation of an $n$-gon then the simple polytope is the $n-3$ dimensional associahedron $A_{n-3}$.\footnote{The precise definition of positive geometry is not required in this work but can be found in \cite{nima1703}. For our purpose, we may define positive geometry $X_{\geq\, 0}$ as  (i) a closed oriented subset in a projective space $X$ which has boundaries of all co-dimensions (ii) there is a unique canonical form on $X$ which has simple poles on (and only on) all the faces of $X_{\geq\, 0}$ and (iii) the residue of the canonical form is itself the canonical form defined intrinsically on the boundary as a positive geometry. For the purpose of this paper, we will need only a specific class of positive geometries, namely convex polytopes embedded in positive region of planar kinematic space ${\cal K}_{n}$.}

We then prove that a weighted sum over canonical forms associated to a family of simple polytopes which are closed under factorisation and gluing generate scattering amplitudes for a scalar field theories with polynomial interactions. Although the proof is simply a consequence of results proved in the literature, we take this opportunity to write down in complete generality the relationship between tree-level S-matrix for scalar theories and  a set of positive geometries. Reader not interested in this level of generality of already known results is encouraged to skip the proof in the first reading.\footnote{We thank Ashoke Sen for discussions on this issue and pressing on us the need to clarify the relationship between positive geometries and polynomial scalar interactions.}

We also caution the reader that the review is not self-contained  as we assume familiarity with the basic notions of positive geometry, especially associahedron, accordiohedron, their convex realisations and corresponding canonical forms. Interested reader is encouraged to read the original references, especially \cite{abhy1711, pinni1810, mrunmay3, mrunmay1911, songmatter} or recent review \cite{ferro}.

As was shown in \cite{pppp, prashanth, mrunmay3}, the tree-level planar (colour-ordered) amplitudes for massless $\phi^{p}$ interactions are sums over canonical forms of kinematic space accordiohedra. For quartic interaction, an $n$-particle amplitude is simply a (weighted) sum over canonical top forms associated to the accordiohedra of dimension $\frac{n-4}{2}$ \footnote{These accordiohedra were discovered by Baryshnikov and are known as Stokes polytopes \cite{baryshnikov}.}.  
Although explicit formulae have been derived for computing tree-level planar amplitudes associated to monomial-scalar interactions, an explicit classification of positive geometries whose canonical forms generate the scattering amplitude for generic polynomial interaction has never been analysed. In this section, we  fill this gap. Although the essential ingredients are simply review of known results, the final result has never been written down explicitly to the best of our knowledge.

This result essentially classifies the families of accordiohedra whose canonical forms generate amplitudes of a local unitary quantum field theory. We refer to such a set as a ``closed family". A closed family is defined as family of polytopes whose canonical forms generate amplitude of local unitary quantum field theories. We will see how different families of accordiohedra give us the amplitudes for massless scalar field theories. In the first section we will define accordiohedron and the families of accordiohedra; and in the second section we will discuss how to obtain a tree level scalar field theory amplitude from the families of accordiohedra.

\subsection{A Closed Family of Accordiohedra}\label{cfa}
Accordiohedra are generalization of a family of polytopes called associahedra. The $n-3$ dimensional associahedron $\mathcal{A}_{n-3}$ is a simple polytope (i.e., each vertex is adjacent to $n-3$ co-dimension one faces) whose vertices are in one-to-one correspondence with triangulations of an  $n$-gon and whose facets (co-dimension one boundaries) are in one-to-one correspondence with diagonals of $n$-gon. The facets of associahedra satisfy the following property which we call ``factorization''. Any $d$-dimensional facet of $\mathcal{A}_{n-3}$ is a product of two lower dimensional associahedra $\mathcal{A}_{r} \times \mathcal{A}_{d-r} $ for some $r$ ($0\leq r \leq d$). 

Accordiohedron is a generalization of associahedra where the vertices of the polytope are associated with various dissections of $n$-gon instead of just triangulations. For example, we could consider a polytope where the vertices are associated with quagrangulations of an $n$-gon. Such a polytope is known as Stokes polytope. It can be easily verified with simple example of a hexagon that we can not have a simple polytope with vertices in one-to-one correspondence with quadrangulations of $n$-gon and facets (co-dimension one boundaries) in one-to-one correspondence with diagonals. There are more than necessary quandrangulations and not all diagonals can be part of a quadrangulation.

To deal with this we introduce a notion of ``compatiblity" with a reference quadrangulation. We say a diagonal $\delta$ of an $n$-gon is compatible with reference quadrangulation $Q$ if the set of diagonals and sides of $Q$ that intersect the segment $\delta^{\prime}$ obtained by small clock-wise rotation of $\delta$ is connected. The Stokes polytope associated with reference quadrangulation $Q$ is a simple polytope whose vertices are in one-to-one correspondence with quadrangulations formed by diagonals compatible with $Q$ and whose co-dimension one facets are in one-to-one correspondence with diagonals compatible with $Q$. We should note here that, unlike the family of associahedra there are more than one Stokes polytopes of a given dimension. We have one Stokes polytope for every quadrangulation. However, just like associahedra the facets of Stokes polytopes are products of lower dimensional Stokes polytope.

Now we define $\mathcal{AC}(D)$ the accordiohedron associated with a reference dissection $D$. To do that we first have to define the notion of compatibility of a diagonal with a reference dissection just as defined it for quadrangulations. We say a diagonal $\delta$ of an $n$-gon is compatible with reference dissection $D$ if the set of diagonals and sides of $D$ that intersect the segment $\delta^{\prime}$ obtained by small clock-wise rotation of $\delta$ is connected.(See figure \ref{acccompatibility}).
\begin{figure}
    \centering
    \includegraphics[scale=0.5]{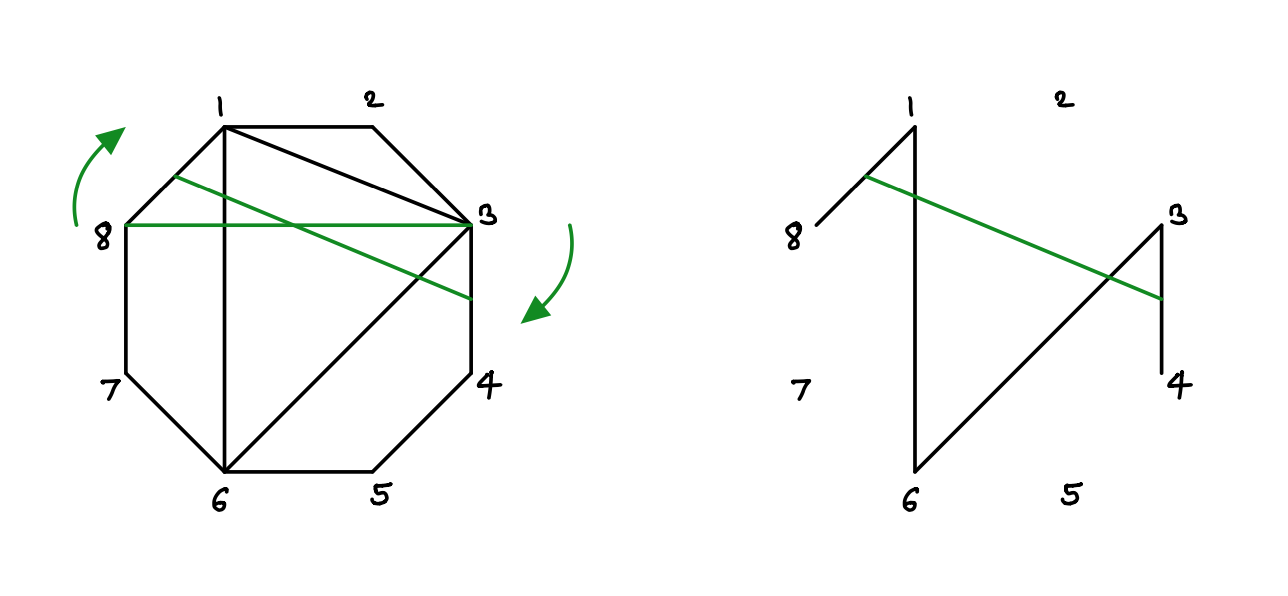}
    \caption{The diagonal (38) of the (dual) octagon is compatible with the reference \{13,36,16\}}
    \label{acccompatibility}
\end{figure}

The accordiohedron associated with dissection $D$ is a simple polytope whose vertices are in one-to-one correspondence with maximal dissections formed by diagonals compatible with $D$ and whose co-dimension one facets are in one-to-one correspondence with diagonals compatible with $D$.

We have one accordiohedra for every dissection of a polygon. Thus the space of accordiohedra is labeled by dissections of polygons. Associahedra and Stokes polytopes are some examples of accordiohedra where the reference dissections are triangulations and quadrangulations respectively. Now we will look at different ways of classifying the space of accordiohedra.

To begin with, we can classify the space of accordiohedra into different families using the following criteria
\begin{itemize}
    \item Number of sides (external sides) of the reference dissection
    \item Dimension of accordiohedra, or equivalently number of diagonals in the reference dissections.
    \item Constituents or building blocks of dissection (triangles, quadrilaterals, pentagons, etc.)
\end{itemize}
Note, these are the criteria for reference dissection, which is expected as dissections label the space of accordiohedra. All these criteria are encoded in the following infinite dimensional vector associated with the dissection, $\vb{v}=(v_{3},v_{4},v_{5},\ldots)$. Where $v_{r}$ is the number of $r$-gons in the reference dissection. Hence, all but finitely many components of $\vb{v}$ are zero. Number of sides of reference dissection is given by 
\be 
n = 2 + \sum_{i=3}^{\infty} (i-2) v_{i}. 
\ee 
The dimension of accordiohedra is given by 
\be 
\textrm{Dim}\left[\mathcal{AC}(D)\right] = -1 +   \sum_{i=3}^{\infty}  v_{i}. 
\ee 
The family of associahedra consists of accordiohedra with $\vb{v} = (n-2,0,0,\ldots)$ and the family of Stokes polytopes consists of accordiohedra with $\vb{v}=(0,\frac{n-2}{2},0,0,\ldots)$.

There are two further classification of the space of accordiohedra which will play a role in identifying the family of polytopes whose forms generate tree-level amplitudes of massless scalars. One of these classifications is coarser than the three discussed above and the other one is finer.

{\bf Coarser classification} :\\
The family of associahedra and Stokes polytopes satisfy the interesting property that facets of associahedra are products of lower dimensional associahedra and facets of Stokes polytopes is product of lower dimensional Stokes polytopes.
We would like to classify the set of accordiohedra which are closed under such factorisation.

We first define pure accordiohedron. Any accordiohedron is called pure if the reference dissection breaks the polygon into a $p$-cells with $p\, \geq\, 3$. Pure accordiohedra are interesting as each tower of Pure accordiohedra ${\cal AC}(D)$ (where $D$ is a $p$-gulation of an $n$-gon and tower is defined with respect to the number of external vertices $n$)  is closed under factorization. 


 Now we introduce the notion of ``closed under gluing". We say a set of accordiohedra $S$ is closed under gluing if taking the reference dissections of any two elements of $S$ and gluing them along some side gives you the reference of some element of $S$ then we say $S$ is closed under gluing. The sets of associahedra and stokes polytopes are examples of sets closed under gluing. 
 
 If we take two dissections, $D_{1}$ of $n_{1}$-gon and $D_{2}$ of $n_{2}$-gon, and glue these dissections along some side, we get a dissection $D$ of $n_{1}+n_{2}-2$-gon. The side along which the dissections $D_{1}$ and $D_{2}$ were glued becomes a diagonal of $D$. Let's denote that diagonal by $\delta$. The vectors $\vb{v}_{1}$, $\vb{v}_{2}$, and $\vb{v}$ associated with $D_{1}$, $D_{2}$, and $D$, respectively, are related by $\vb{v}_{1} + \vb{v}_{2} = \vb{v} $. Thus, the dimensions of $\mathcal{AC}(D_{1})$, $\mathcal{AC}(D_{2})$ and $\mathcal{AC}(D)$ are related by $ \mathrm{Dim}[\mathcal{AC}(D_{1})] + \mathrm{Dim}[\mathcal{AC}(D_{2})] = \mathrm{Dim}[\mathcal{AC}(D)] -1 $. Further, the co-dimension one facet associated with the diagonal $\delta$ is given by $\mathcal{AC}(D_{1}) \times \mathcal{AC}(D_{2})$. That is, the facet $\delta$ of $\mathcal{AC}(D)$ factorizes into $\mathcal{AC}(D_{1}) \times \mathcal{AC}(D_{2})$. Therefore, the notion of closed under gluing is closely related to the notion of closed under factorization. However they are not the same. An example of set closed under factorization but not closed under gluing is the set of associahedra with dimension less than 10. An example of a set closed under gluing but not closed under factorization is the set of associahedra with dimensions greater than 9. 
 
 The coarser families of accordiohedra we are interested in are the sets of accordiohedra which are closed under factorization and closed under gluing.  An example of set closed under factorization and closed under gluing is the set of accordiohedra whose vector $\vb{v}$ is of the type $(v_{1},v_{2},0,0,\ldots)$. This set contains all associahedra and all Stokes polytopes along with all accordiohedra where the reference dissection has both triangles and quadrilaterals.  

We claim that all sets which are closed under factorization and closed under gluing are of the type 
\be 
S_{i_{1},i_{2},\ldots,i_{r}} = \{ \mathcal{AC}(D) | \vb{v}_{D} = \sum_{j=1}^{r} a_{i_{j}} \hat{e}_{i_{j}} \}.
\ee 
Where $\hat{e}_{i_{j}}$ is a unit vector with one in $i_{j}-2$ th position and zero in other positions. So the set of all associahedra is denoted by $S_{3}$, and the set of all associahedra and all Stokes polytope along with all accordiohedra where the reference dissection has both triangles and quadrilaterals is denoted by $S_{3,4}$. 


{\bf Finer classification}:\\
Now we move on to the second, finer classification. The Dihedral group $D_{n}$ acts on the set of all dissections of $n$-gon. The the action of $D_{n}$ dissection does not change the building blocks of the dissection. That is $\vb{v}_{D}= \vb{v}_{g\cdot D}$. Where, $g\cdot D$ denotes the action of $g \in D_{n}$ on $D$ and $\vb{v}_{D}$ and $\vb{v}_{g\cdot D}$ are the infinite dimensional vectors associated with disections $D$ and $g\cdot D$ respectively. Thus, there is an action of $D_{n}$ on all dissections with a given $\vb{v}$. We can classify the set of all dissections with a given $\vb{v}$ by the orbits of this action. We denote the set of orbits by $\mathcal{P}_{\vb{v}}$ and call the representative dissection of a orbit as the ``primitive" of that orbit.   

\subsection{Amplitude as a Sum Over d-$\log$ Forms}\label{asdln}
A set of accordiohedra which is closed under factorization and closed under gluing gives us tree level planar scalar field theory amplitudes. In this section, we will describe how we get tree level planar scalar field theory amplitudes from $S_{i_{1},i_{2},\ldots, i_{r}}$.

The $n$-point amplitude is obtained from the subset $S_{i_{1},i_{2},\ldots, i_{r}}^{(n)}$ of $S_{i_{1},i_{2},\ldots, i_{r}}$ where $\sum_{j=1}^{r} (i_{j}-2) a_{i_{j}} = n-2 $ that is number of sides of reference dissections is n. Suppose 
\be 
\mathcal{V}^{(n)}_{i_{1},i_{2},\ldots, i_{r}} = \{\vb{v} |\vb{v}  = \sum_{j=1}^{r} a_{i_{j}} \hat{e}_{i_{j}}, \text{ with } \sum_{j=1}^{r} (i_{j}-2) a_{i_{j}} = n-2  \},
\ee 
for ease of notation we will drop $i_{1},i_{2},\ldots, i_{r} $  in the subscript and just write $\mathcal{V}_{n}$. 

We will soon see that a function $\omega(\mathcal{AC}(D)) : \mathcal{K}_{n} \rightarrow \mathbb{R} $ is associated with accordiohedra $\mathcal{AC}(D)$. Where $\mathcal{K}_{n}$ is the $n$-particle kinematics space. The scattering amplitude is a weighted sum of these functions over the space $S_{i_{1},i_{2},\ldots, i_{r}}^{(n)}$.  
\be 
\mathcal{M}_{n}(p_{1},\, p_{2},\, \ldots, p_{n}) = \sum_{\vb{v}\in \mathcal{V}_{n}}   \lambda_{i_{1}}^{a_{i_{1}}} \cdots \lambda_{i_{r}}^{a_{i_{r}}}  \sum_{D \in \mathcal{P}_{\vb{v}}}  \hspace{0.25cm} \sum_{\sigma \in D_{n} / G_{D} }\alpha_{\sigma \cdot D} \hspace{0.1cm} \omega(\sigma \cdot D).
\ee
Where $G_{D} = \{ g \in D_{n}| g\cdot D = D \}$ is the stabiliser of $D$ under the action of dihedral group. It can be shown that we can choose the weights $\alpha_{D}$ to be equal for $D$s belonging to the same orbit under the action of dihedral group. For a given $\vb{v}$ the term $ \sum_{D \in \mathcal{P}_{\vb{v}}}  \hspace{0.25cm} \sum_{\sigma \in D_{n} / G_{D} }\alpha_{\sigma \cdot D} \hspace{0.1cm} \omega(\sigma \cdot D)$ is independent of the family of accordiohedra $S_{i_{1},i_{2},\ldots, i_{r}}^{(n)}$.

The combinatorial polytope $\mathcal{AC}(D)$ gives us the canonical form associated with it on the kinematic space. This form is given by, 
\be 
\Omega\left[\mathcal{AC}(D) \right] = \sum_{D_{a} \in \mathcal{AC}(D) }  \mathrm{Sgn}(D_{a}) \bigwedge_{(ij)\in D_{a} } d \log(X_{ij}).
\ee
Where $X_{ij}= (p_{i}+ p_{i+1}+ \cdots+ p_{j-1})^{2}$ are variables on the kinematic space. Now, to get the function $\omega(\mathcal{AC}(D))$, we have to realise the combinatorial polytope $\mathcal{AC}(D)$ in the kinematic space and restrict the canonical form on the kinematic space accordiohedra. 
\be 
 \Omega\left[\mathcal{AC}(D) \right]\vert_{\mathcal{AC}(D)_{k}} = \omega(\mathcal{AC}(D) ) \bigwedge_{(ij) \in D } d X_{ij}.
\ee
It can be shown that $\omega(D) = \sum_{D_{i}\in \mathcal{AC}(D)} \psi(D_{i})$ where $\psi(D_{i}) =  \prod_{(ab)\in D_{i}} \frac{1}{X_{ab}}$. 

Now let's see how we determine the weights $\alpha$. We can write the amplitude as  
\begin{align}
\mathcal{M}_{n}(p_{1},\, p_{2},\, \ldots,\, p_{n}) &= \sum_{\vb{v}\in \mathcal{V}_{n}}   \lambda_{3}^{v_3} \cdots \lambda_{n}^{v_n}  \sum_{D \in \mathcal{P}_{\vb{v}}}  \alpha_{D} \sum_{\sigma \in D_{n} / G_{D} }            \sum_{D_{i}\in \mathcal{AC}(\sigma \cdot D)} \psi(D_{i}) .
\end{align}
We fix the weights demanding that all poles of $\mathcal{M}(1,2,\ldots, n) $ come with residue one.

To analyse the requirement on the weights it would be useful to define  $\delta(D_{i},D_{j})$, $ N([D_{i}],[D_{j}]) $, $M([D_{i}],[D_{j}])$, and $\Psi([D])$ as follows
\begin{itemize}
\item $\delta(D_{i},D_{j})$ tells you wether the dissection $D_{j}$ occurs in $\mathcal{AC}(D_{i})$ or not. That is, 
\be
\delta(D_{i},D_{j})=
 \begin{cases} 
1 \text{ if $D_{j} \in \mathcal{AC}(D_{i})$ }\\
0 \text{ if $D_{j} \notin \mathcal{AC}(D_{i})$ }
\end{cases}
\ee
For any $\sigma \in D_{n}$,
\begin{align}
\delta(D_{i},D_{j}) = \delta(\sigma \cdot D_{i},\sigma \cdot  D_{j})
\end{align}
\item $N([D_{i}],[D_{j}])$ is the number of times any dissection $D_{k}$ in the orbit of $D_{j}$ occurs in the set of all accordiohedra of dissections in the orbit of $D_{i}$. That is,
\be 
N([D_{i}],[D_{j}]) = \sum_{D_{\ell}\in [D_{i}]} \delta(D_{\ell},D_{k}) \hspace{2cm} \text{for any } D_{k}\in [D_{j}] .
\ee
\item $M([D_{i}],[D_{j}])$ is the number of elements from the orbit of $D_{i}$ occur in any accordiohedron of a dissection in the orbit of $D_{j}$. That is,
\be 
M([D_{i}],[D_{j}]) = \sum_{D_{k}\in [D_{i}]} \delta(D_{j},D_{k}).
\ee
They are related as follows,
\be 
N([D_{i}],[D_{j}]) = \frac{| G_{D_{j}} |}{| G_{D_{i}} |}M([D_{j}],[D_{i}]).
\ee
Suppose $[D_{i}] \in \mathcal{P}_{\vb{v}_{i}}$ and $[D_{j}] \in \mathcal{P}_{\vb{v}_{j}}$, and if $\vb{v}_{i} \neq \vb{v}_{j}$ then $\delta(D_{i},D_{j}) =0 $  and hence $N([D_{i}],[D_{j}])=M([D_{i}],[D_{j}]) =0$.
\item Lastly,
\be 
\Psi([D]) = \sum_{\sigma \in D_{n}/G_{D}} \psi(\sigma \cdot D).
\ee
\end{itemize}
Now the scattering amplitude can be expressed as
\be 
\mathcal{M}_{n}(p_{1},\, p_{2},\, \ldots, p_{n}) = \sum_{\vb{v}\in \mathcal{V}_{n}}   \lambda_{3}^{v_3} \cdots \lambda_{n}^{v_n} \sum_{[D_{i}]\in \mathcal{P}_{\vb{v}}}  \sum_{[D_{j}]\in \mathcal{P}_{\vb{v}}}  \alpha_{[D_{j}]} N([D_{j}],[D_{i}])  \Psi([D_{i}]). 
\ee
The requirement on the weights is equivalent to the following system of linear equations
\be 
 \sum_{[D_{j}]\in \mathcal{P}_{\vb{v}}}  \alpha_{[D_{j}]} N([D_{j}],[D_{i}])  = 1 \hspace{1.5cm} \forall [D_{i}] \in \mathcal{P}_{\vb{v}}.
\ee

All the vertices (triangulations) of combinatorial associahedra are on equal footing, whereas in other accordiohedra the reference dissection is special compared to the rest. This disparity is captured in the weight of the accordiohera.

We end this section by briefly reviewing the positive geometry for massive bi-adjoint scalar amplitude. Although our work is concerned with scattering amplitude of massless particles (but with massive as well as massless propagators), we review the known extension of the associahedron program to massive scalar amplitudes.

The positive geometry for tree-level colour ordered amplitude of a massive bi-adjoint scalar is simply the ABHY associahedron whose facets are located $X_{ij}\, =\, m^{2}$ \cite{nima-talk}. The planar kinematic space of massive particles is defined as follows. Consider the massive kinematic space ${\cal K}_{n}^{(m)}$ co-ordinatized by the planar variables, 
\begin{flalign}
\tilde{X}_{ij}\, =\, (p_{i}\, +\, \dots,\, +\, p_{j-1})^{2}\, -\, m^{2}
\end{flalign}
Then $\tilde{X}_{i,i+1}\, =\, \tilde{X}_{1n}\, =\, 0$.

Just as the massless kinematic space, ${\cal K}_{n}^{(m)}$ is $\frac{n(n-3)}{2}$ dimensional and we can locate ABHY associahedra in the positive region ${\cal K}_{n}^{m +}$ of the kinematic space. The corresponding canonical form equals the tree-level planar amplitude for the massive bi-adjoint $\phi^{3}$ theory. 

\section{Two-Scalar Field Theory with Cubic Interaction}\label{bsft}
As stated in the introduction, our aim is to analyse tree-level scattering amplitudes in a theory with two scalar fields $\phi_{1}^{a A},\, \phi_{2}^{b B}$ that transform in the bi-adjoint representation of $U(N)\, \times\, U(\tilde{N})$.
The Lagrangian is
\begin{flalign}\label{L1}
\begin{array}{lll}
L(\, \phi_{1},\, \phi_{2})&=\\
&\frac{1}{2}\, \sum_{i=1}^{2}\, \partial_{\mu}\, \phi_{i aA}\, \partial^{\mu}\, \phi_{i}^{aA}
\, -\, \frac{1}{2} m^{2}\phi_{2}^{aA}\, \phi_{2\, aA}\, -\, \frac{\lambda_{1}}{3!}\, f_{abc}\, \tilde{f}_{ABC}\, \phi_{1}^{aA}\phi_{1}^{bB}\, \phi_{1}^{cC}\\
&\hspace*{2.9in}-\, \frac{\lambda_{2}}{2!}\, f_{abc}\, \tilde{f}_{ABC}\, \phi_{1}^{aA}\, \phi_{1}^{bB}\, \phi_{2}^{cC}
\end{array}
\end{flalign}
Tree-level amplitudes in a multi-scalar field theory with Yukawa-type interaction between the bi-adjoint scalar field and a ``massless" matter field which transforms in (bi)fundamental representation of $U(N)\, \times\, U(\tilde{N})$ have been analysed in a beautiful series of works \cite{songmatter,fei}. The bi-adjoint scalar field self-interaction as well as the Yukawa interactions were taken to have the same coupling constant $\lambda$ in these works. It was proved that the positive geometries underlying amplitudes in such a theory are what are called open associahedra.

While certainly motivated by the analysis in \cite{songmatter, fei}, we have a different set up involving one massive and one massless scalar field with two independent coupling constants which govern the $\phi_{1}^{3}$  and $\phi_{1}^{2}\phi_{2}$ interactions respectively.

Schematically, we are interested in amplitude generated by interaction potential,
\begin{flalign}\label{pot1}
V(\phi_{1}, \phi_{2})\, =\, (\lambda_{1} \phi_{1}^{3}\, +\, \lambda_{2} \phi_{1}^{2}\, \phi_{2}\, )
\end{flalign}
where $\phi_{1}, \phi_{2}$ are massless and massive scalar fields respectively.

We will be specifically interested in tree-level and colour-ordered (CO) amplitudes involving \emph{only} massless external particles. Much of the analysis in our paper can be generalised for generic configurations of the external particles, but as we are eventually interested in integrating out the massive modes, our primary focus is on amplitudes involving massless external states. We will denote such amplitudes as ${\cal M}_{n}^{\textrm{CO}}(p_{1}, \dots,\, p_{n})$ with $p_{i}^{2}\, =\, 0\, \forall\, i$. Perturbative evaluation of the amplitude to sub-leading order in $\lambda_{2}$ can be written as,
\begin{flalign}\label{march8-1}
{\cal M}^{\textrm{CO}}_{n}(p_{1}, \dots, p_{n})\, =\, \lambda_{1}^{n-3}\, {\cal M}_{n}^{CO\, (1)}(p_{1}, \dots, p_{n})\, +\, \lambda_{2}^{2}\, \lambda_{1}^{n-4}\, {\cal M}_{n}^{\textrm{CO} (2)}(p_{1}, \dots, p_{n})
\end{flalign}
${\cal M}_{n}^{\textrm{CO}\, (1)}$ is generated by $\phi_{1}^{3}$ vertices in which all the propagators are massless and ${\cal M}_{n}^{\textrm{CO}\, (2)}$ is generated by all but two $\phi_{1}^{2}\, \phi_{2}$ vertices with one massive and $n-4$ massless propagators. Thus the set of all the Feynman graphs that contribute to the total amplitude ${\cal M}_{n}^{\textrm{CO}}$  has at most one  $\phi_{2}$ propagator.

The effective field theory involving only the massless scalar field  is obtained by taking $m\, \rightarrow\, \infty,\, \lambda_{2}\, \rightarrow\, \infty$ such that $\frac{\lambda_{2}}{m}\, =\, g\, =\, \textrm{fixed}$. This ensures that in the lower energy limit we obtain amplitudes associated to
\begin{flalign}\label{pot2}
V_{\textrm{eff}}(\phi_{1})\, =\, \lambda_{1}\, \phi_{1}^{3}\, +\, g\, \phi_{1}^{4}
\end{flalign}
Our goal is to find positive geometries in planar kinematic space ${\cal K}_{n}$  whose set of co-dimension one facets either correspond to a pole of the massive propagator or the massless propagator. For the amplitudes of interest (namely up to order $\lambda_{2}^{2}$) each vertex of such a positive geometry can be adjacent to at most one ``massive facet".

\subsection{From Feynman Graphs to Dual Triangulations}\label{fgdt}
In the case of the bi-adjoint cubic scalar coupling with a single massless field, the  positive geometries are convex realisations of the combinatorial associahedron, each of whose vertices are in bijection with complete  ``mono-chromatic triangulations" of an $n$-gon where each diagonal is dual to a propagator in the Feynman graph.

Hence we first need to classify triangulations which are dual to the Feynman graphs that produce amplitude defined in equation \eqref{march8-1}. The set of Feynman graphs that produce the amplitude proportional to $\lambda_{1}^{n-3}$ is dual to triangulations of an $n$-gon where the diagonals have no additional labels. As ABHY has taught us, it is the  immensely deep structure hidden in the combinatorics of these traingulations which generate positive geometry for a single scalar field amplitude. 

However, for the amplitude contribution proportional to $\lambda_{2}^{2}$, such mono-chromatic triangulations are not appropriate as the contributing Feynman graphs have precisely one $\phi_{2}$ propagator. Hence the dual triangulations are such that precisely one of the diagonals in a \emph{complete} triangulation is distinguished from the other diagonals. We colour such diagonals red  to differentiate them from the black diagonals.  We refer to triangulations in which all but one diagonals are black as ``mostly black  triangulations" ( See figure \ref{offcolor}). 


\begin{figure}[H]
    \centering
    \includegraphics[scale=0.5]{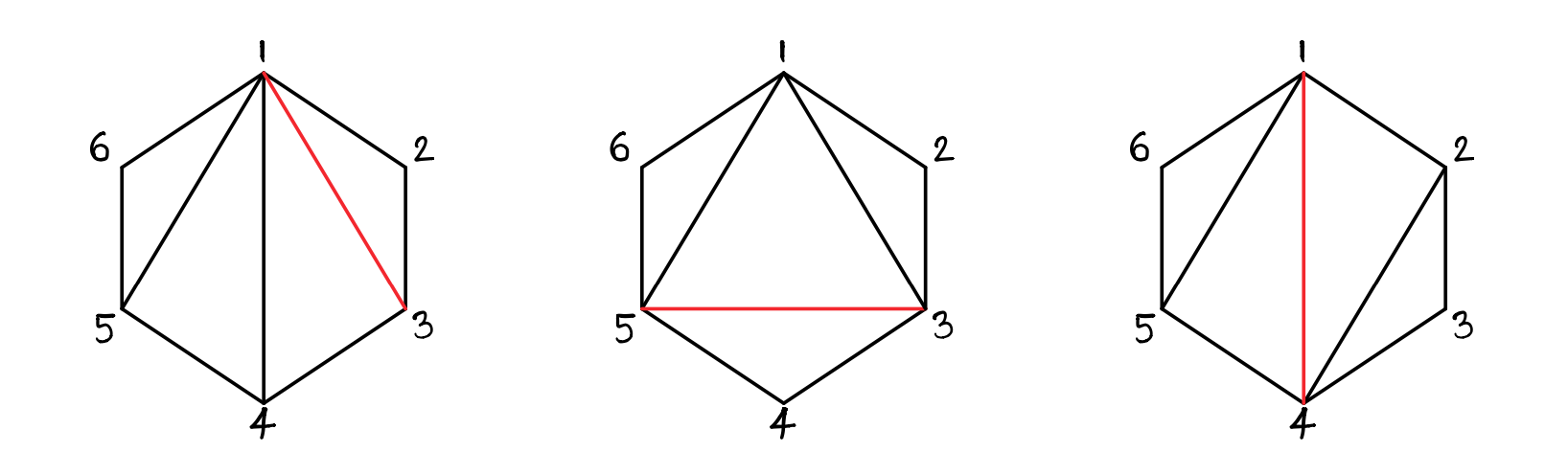}
    \caption{Mostly Black triangulations with (13) red and (14),(15) black on the left, (35) red and (13),(15) black in the center and (14) red and (24), (15) black on the right. }
    \label{offcolor}
\end{figure}
It is immediately obvious that mostly black  triangulations are in one to one correspondence with cubic Feynman graphs where all but one propagators are massless and one of the propagator is massive.

The simplest mutation rule one can define on the mostly black  triangulations is the one where mutation does not change colour of the diagonal. We can then immediately see that mostly black  triangulations produce  combinatorial polytopes in one and two dimensions . We denote a black diagonal between vertices $i$ and $j$ as $(i,j)$ and the red diagonal as $\textcolor{red}{(k,l)}$.\footnote{When we write a set of diagonals, we will indicate each diagonal as $(ij)$ instead of explicitly writing it out as $(i,j)$.}
\begin{itemize}
\item Triangulation of a $4$-gon with precisely one red diagonal and with the mutation rule which does not change colour produces one dimensional associahedron with the vertices being labelled by two red-diagonals $\textcolor{red}{(1,3)}$ and $\textcolor{red}{(2,4)}$.
\item  Mostly black  triangulation of a $5$-gon produces a two dimensional polygon with ten edges, whose vertex set is in bijection with $ ( (i,j),\, \textcolor{red}{(k,l)})$ 
\end{itemize}
Let us first start with $n\, =\, 4$ case. In this case the requirement of mostly black triangulations which has at least one red diagonal has precisely one red (and no black diagonal) and the resulting polytope is simply the one dimensional assocaihedron $A_{1}$.

In the case of $n\, =\, 5$ the resulting polytope is a $10$-gon and is known in the literature as the two dimensional colorful associahedron $A_{2}^{c}$. \cite{colorass1}.
An $n$-dimensional colorful associahedron $A_{n-3}^{c}$ is combinatorial polytope associated with colored triangulations of $n+3$-gons in which the diagonals are assigned a color from a set of $n$ colors. 
$A_{2}^{c}$ is shown in the figure \ref{colorfulasso} below. However these simple mutation rules do not work for two reasons.\\
\begin{figure}[H]
    \centering
    \includegraphics[scale=0.33]{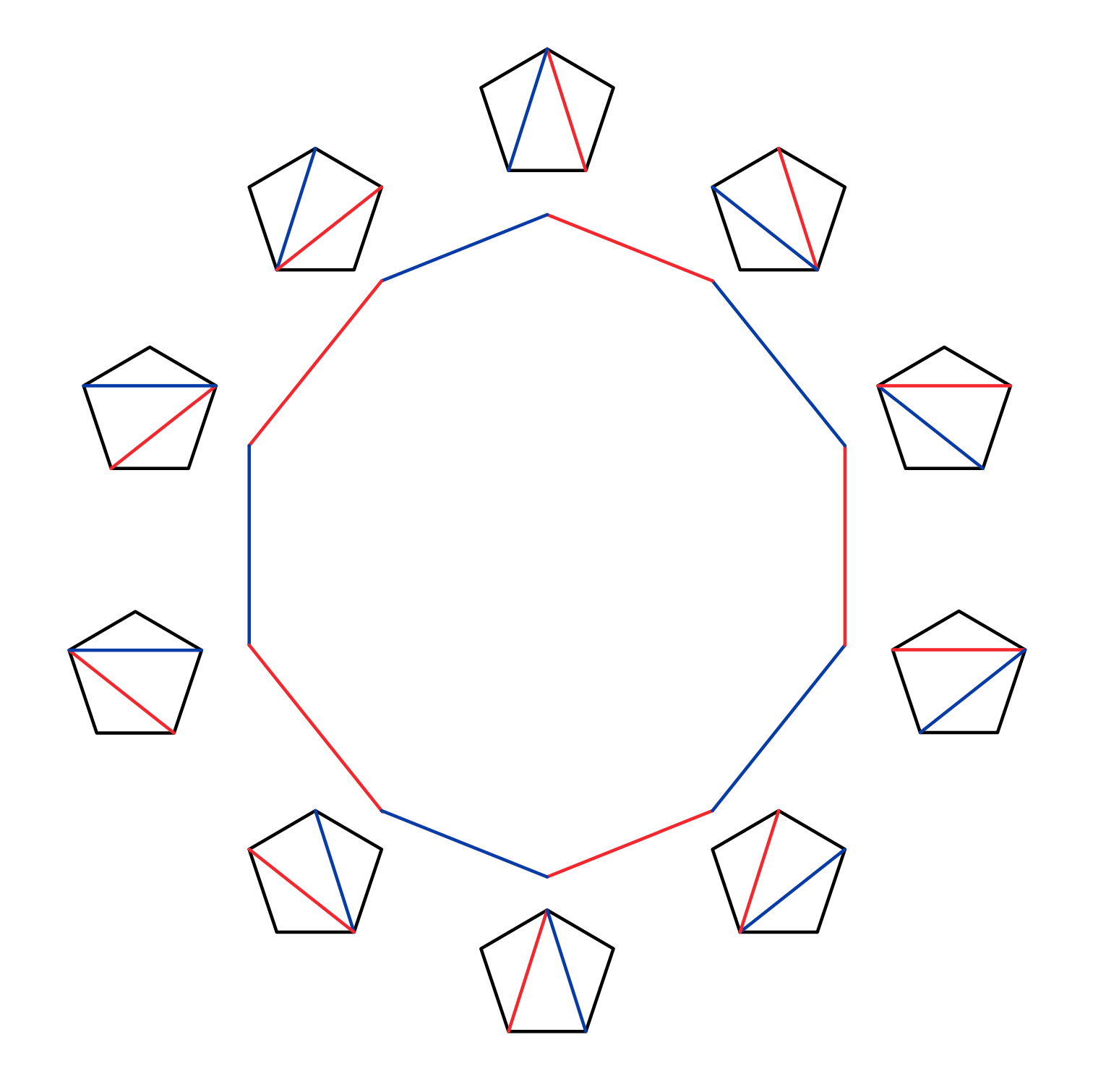}
    \caption{$A_{2}^{c}$}
    \label{colorfulasso}
\end{figure}

\begin{itemize}
\item If we consider complete mostly black triangulations of an $n$-gon where $n\, \geq\, 6$ then there are no closed combinatorial polytopes.  This can be seen by looking at the example of  $n\, =\, 6$.  Let us assume  on the contrary  that there exists an polytope $A_{3}^{r}$ whose vertices are in bijection with complete triangulation of the hexagon in which precisely one diagonal is red. 
If ${\cal A}_{3}^{r}$ exists, then each of it's co-dimension one facets are in bijection with one partial triangulation of the hexagon. Hence we can have facets corresponding to partial triangulation where a black diagonal $(i,j)$ or a red diagonal $\textcolor{red}{(i,j)}$ is deleted.
Clearly, each such facet is either a square (red diagonal with $ j - i\, =\, 3$ (modulo 6) ), pentagon (red diagonal with $ j - i\, =\, 2$ (modulo 6)) or a decagon (black diagonal with $ j - i\, =\, 2$ (modulo 6)). However there is no unique facet associated to black diagonal $(i,j)$ with $ j - i\, =\, 2$ (modulo 6). This can be seen as follows. Consider e.g. partial triangulation obtained by deleting the diagonal $\textcolor{red}{(1,4)}$. The corresponding facet is adjacent to Four Facets obtained by deleting the diagonals 
\begin{flalign}
\{\, (1,3),\, (2,4),\, (4,6),\, (1,5)\, \}\nonumber 
\end{flalign}
respectively.
However there is no unique facet associated to $(1,4)$ (and in general any $(i,j)$), as adjacent to such a facet,  there are two possible sets of vertices,
\begin{flalign}
\begin{array}{lll}
S_{1}\, =\, \{\, (\textcolor{red}{13}, 14, 15)\, (\textcolor{red}{24}, 14, 15),\, (\textcolor{red}{24}, 14, 46)\, (\textcolor{red}{13}, 14, 46)\, \}
\end{array}
\end{flalign}
and
\begin{flalign}
\begin{array}{lll}
S_{2}\, =\, \{\, (\textcolor{red}{15}, 14, 13)\, (\textcolor{red}{15}, 14, 24),\, (\textcolor{red}{46}, 14, 24)\, (\textcolor{red}{46}, 14, 13)\, \}
\end{array}
\end{flalign} 
Clearly if we choose one of the $S_{1}$ or $S_{2}$ as being the set of vertices belonging to the $(1,3)$-facet then we immediately see that $(1,3)$ and $(1,5)$ which are adjacent in the neighbourhood of $\textcolor{red}{(1,4)}$ are not adjacent in the neighbourhood of the facet associated to $(1,4)$. A moment of meditation will convince the reader that this example clearly indicates why the set of all mostly-black triangulations can not form a closed polytope.\footnote{We thank Vincent Pilaud for discussion on this issue.}
This is the first obstruction towards realising positive geometry for scattering amplitudes in a theory with $V(\phi_{1}, \phi_{2})\, =\, \phi_{1}^{2}\phi_{2}$ where all the external particles are massless.
\item The second obstruction to simple minded mutation rules  comes from the impossibility of convex realisation of colorful associahedron in ${\cal K}_{n}^{+}$.\\
\begin{lemma}
The two dimensional colorful associahedron $A_{2}^{c}$ which is the combinatorial polytope for $n, =\, 5$ particles can not be realised in ${\cal K}_{n}^{+}$.\nonumber
\end{lemma}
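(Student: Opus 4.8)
The plan is to exploit the fact that, in contrast with the $n\ge 6$ case treated just above (where no closed polytope exists at all), the combinatorial $10$-gon $A_{2}^{c}$ \emph{does} exist, so the obstruction must be one of \emph{convex realisability}: its normal fan is incompatible with the linear relations that any ABHY-type embedding into ${\cal K}_{5}^{+}$ is forced to satisfy. First I would record the facet data. The ten facets of $A_{2}^{c}$ are in bijection with the ten coloured diagonals of the pentagon, five black and five red; in a realisation the black facet of $(ij)$ must lie on $\{X_{ij}=0\}$ (a massless pole) and the red facet of $(ij)$ on $\{\tilde{X}_{ij}=X_{ij}-m^{2}=0\}$ (the massive pole). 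Thus a realisation would be a convex $10$-gon inside a two-dimensional slice of ${\cal K}_{5}$ all ten of whose edges are cut out by these specific hyperplanes.

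Next I would make the slice explicit. Since the slice is two-dimensional, the five planar variables $X_{13},X_{14},X_{24},X_{25},X_{35}$ restrict to affine functions obeying three independent linear relations; for the ABHY slice these are fixed by holding the non-planar Mandelstam data constant. Choosing $X_{13}$ and $X_{35}$ as affine coordinates, the relations take the form $X_{14}=a-X_{35}$, $X_{25}=b-X_{13}$ and $X_{24}=c-X_{13}-X_{35}$ for positive constants $a,b,c$, so that the black locus $\{X_{ij}\ge 0\ \forall (ij)\}$ is indeed the ABHY pentagon (a box with one corner truncated). The crucial consequence is the pair of gradient identities $\nabla X_{14}=-\nabla X_{35}$ and $\nabla X_{25}=-\nabla X_{13}$, i.e.\ the underlying pentagon carries two antipodal pairs among its five edge-normals.

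I would then add the red facets and derive a collision of outward normals. For both $\{X_{ij}=0\}$ and $\{X_{ij}=m^{2}\}$ to bound a single convex polytope it must sit in the slab $0\le X_{ij}\le m^{2}$, so the red facet of $(ij)$ carries outward normal $+\nabla X_{ij}$, opposite to the black normal $-\nabla X_{ij}$. But then $\nabla X_{14}=-\nabla X_{35}$ makes the red facet $\textcolor{red}{(35)}$ and the black facet $(14)$ share the \emph{same} outward normal: in coordinates both impose an upper bound on $X_{35}$ (namely $X_{35}\le m^{2}$ and $X_{35}\le a$), so only the tighter one is facet-defining and the other is redundant. The identity $\nabla X_{25}=-\nabla X_{13}$ kills a second facet in the same way (red $\textcolor{red}{(13)}$ versus black $(25)$). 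Hence at most eight of the ten required facets can survive, so the realised region can never be the $10$-gon $A_{2}^{c}$. To close the argument I would dispose of the only alternative orientation: if instead the polytope lay on the side $X_{ij}\ge m^{2}$ of some red facet, then the black facet $\{X_{ij}=0\}$ of the \emph{same} diagonal would be cut away entirely (since $m^{2}>0$), again losing a required facet.

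The step I expect to be the main obstacle is justifying that the identities $\nabla X_{14}=-\nabla X_{35}$ and $\nabla X_{25}=-\nabla X_{13}$ are genuinely forced, and not an artefact of one convenient slice --- i.e.\ that \emph{every} admissible ABHY-type realisation in ${\cal K}_{5}^{+}$ inherits the two antipodal pairs. I would argue this by noting that these are differential identities of the slice, independent of the values of the positive ABHY constants (which only translate facets without rotating their normals), and that they are dihedral-covariant, so no choice of constants or frame can remove the two coincidences. The bookkeeping is tight: ten facets must be produced from only five planar variables, each contributing a parallel pair, and once the forced relations are imposed there is simply no room left to separate the two colliding pairs.
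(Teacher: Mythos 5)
Your proposal is correct and takes essentially the same route as the paper: the paper's (much terser) proof rests on exactly the observation that in any amplitude-producing ABHY-type slice the lines $X_{13}=0$, $\tilde{X}_{13}=0$, $X_{25}=0$, $\tilde{X}_{25}=0$ are forced to be mutually parallel, which no convex $10$-gon with the facet structure of $A_{2}^{c}$ can accommodate, and it likewise relegates the exclusion of non-ABHY realisations (whose canonical forms fail to have residues $\pm 1$) to a footnote-level assertion. Your outward-normal collision count is just the explicit version of this parallelism obstruction (in fact all four slab pairs collide, so at most six of the ten inequalities can be facet-defining, strengthening your bound of eight), so the two arguments coincide in substance.
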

{\bf Proof} : 
We denote the massive facet $X_{ij}\, =\, m^{2}$ as $\tilde{X}_{ij}\, =\, 0$.
$A_{2}^{c}$ is a 10-gon obtained by unfolding $A_{2}$ by painting alternating faces red. (See figure \ref{colorfulasso}).  There is no planar polytopal realisation of $A_{2}^{c}$ such that $X_{13}\, =\, 0$ is parallel to $\tilde{X}_{13}\, =\, 0$, $\tilde{X}_{25}\, =\, 0$ and $X_{25}\, =\, 0$.\footnote{Any other realisation will not have the canonical form all whose residues vertices are $\pm 1$ and hence will not define any tree-level amplitude.} It can be immediately checked that  canonical form associated to any other realisation will not produce the tree-level amplitude for the bi-scalar theory.
\end{itemize}
However these obstructions in fact guides us in our pursuit of positive geometries  whose boundaries correspond to all the singularities of the desired  amplitude. 
In the next section, we show that although $A_{2}^{c}$ can not be realised as a convex polytope in the (positive region of) kinematic space, union of all the facets of $A_{2}^{c}$ and $A_{2}$ arrange themselves into a family combinatorial polytopes. 
\section{Coloured Causal Diamonds and Convex polytopes}\label{ccdcp}
\subsection{A Brief Review of Causal Diamonds}\label{brcd}
The convex realisation of associahedron discovered by ABHY in \cite{abhy1711} was formalised in terms of polytopal fans generated by type-A clusters in \cite{thomas-matte}. The ``naturalness" of these specific realisations arises from the fact that associahedra are naturally associated to type-A cluster algebra and the cluster algebra has enough structure to in turn produce convex realisations of $A_{n-3}$ in ${\cal K}_{n}^{+}$.

However, the connection and dependence on cluster algebra, though striking can be unnerving for a physicist and it would certainly be pleasing to know if the naturalness underlying ABHY realisations could be understood in terms of some simple ``physics" principles.
A remarkable progress in answering this question was made by Arkani-Hamed et. al. in \cite{nima1912} who discovered a  
``causal" structure on the planar kinematic space ${\cal K}_{n}$. The causal structure is realised via  interpreting $X_{ij}$ variables as discretization of a two dimensional massless scalar field $X(u,v)$ with $u$ and $v$ being (abstract) retarded and advanced co-ordinates on the two dimensional flat space-time.

More in detail, in \cite{nima1912}, the authors introduced a causal diamond in ${\cal K}_{n}$ which is a null lattice with vertices labelled by the planar kinematic variables.  $X_{ij}$s with fixed $i$ are placed along the $u\, =\, \textrm{const}$ ``null lines and cyclic symmetry of the $X$ variables, $X_{i, j+n}\, =\, X_{ij}$ can be broken by choosing a strip in the causal kinematic space such that $X_{i,i\pm 1}\, =\, 0$. Kinematic space causal diamonds are best understood by simply staring at the figure \ref{causaldiamond5} below.
\begin{figure}[ht]
    \centering
    \includegraphics[scale=0.4]{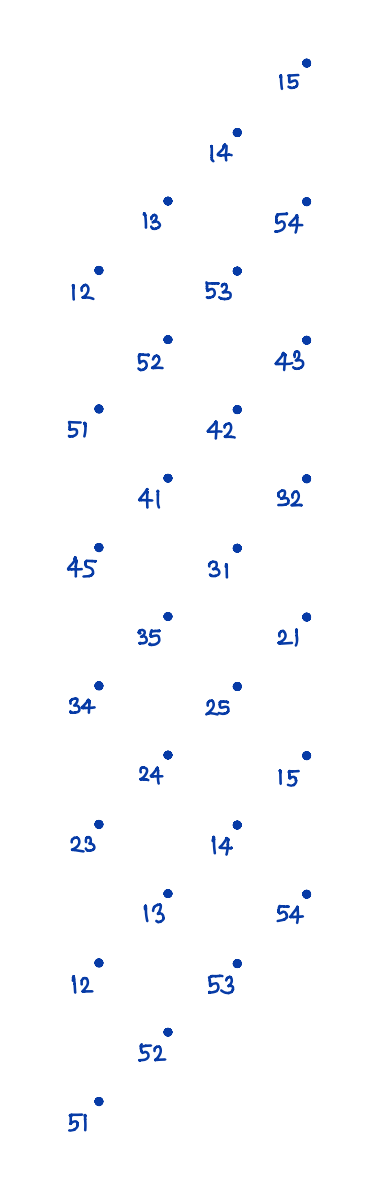}
    \caption{Causal diamond for $n=5$}
    \label{causaldiamond5}
\end{figure}
\pagebreak

Causality in kinematic space beautifully captures compatibility of the diagonals\footnote{Here by compatible diagonals we mean non-intersecting, or equivalently diagonals which together can be a part some triangulation.}. \cite{nima1912} : Given any diagonal $(i,j)$ of an $n$-gon which labels a planar variable $X_{ij}$, all the ``compatible" diagonals are outside or on the light cone of the $X_{ij}$ and all of the incompatible diagonals are in the union of (strictly) future and (strictly) past light cone of $X_{ij}$. We show this beautiful map from a combinatorial 
structure to a causal structure in the figure \ref{causaldiamond6} below.
\begin{figure}[ht]
    \centering
    \includegraphics[scale=0.35]{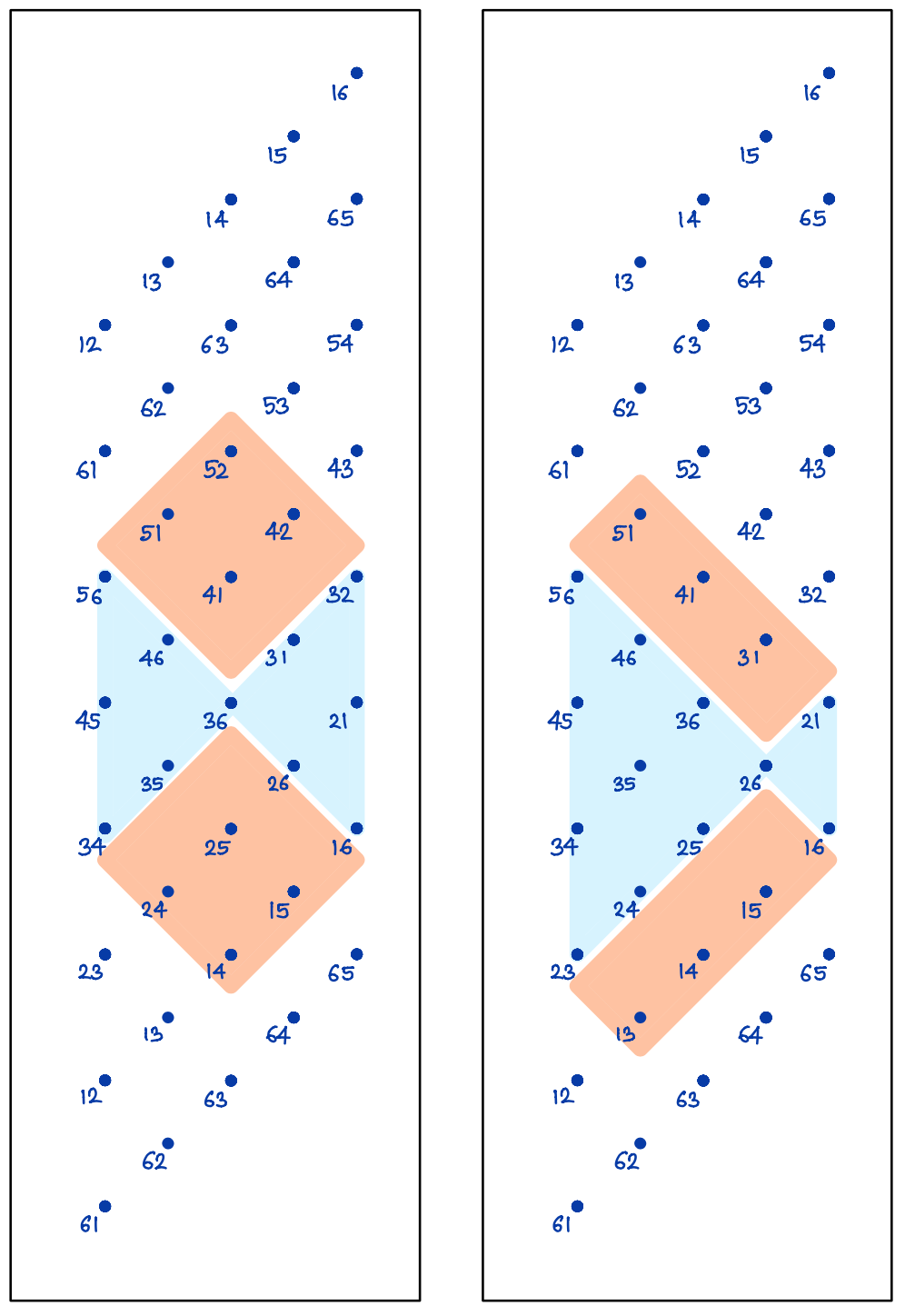}
    \caption{Compatible diagonals}
    \label{causaldiamond6}
\end{figure}

There are several features of the causal diamond which will be central to our analysis below.
\begin{itemize}
\item A temporal evolution in causal diamond corresponds to mutation such that $X_{i,j}$ and $X_{i\, \pm\, 1, j\, \pm\, 1}$ are incompatible diagonals. 
\item All the variables on a constant $u$ or $v$ slice triangulate the $n$-gon. 
\item Even after the redundancy encoded in cyclic symmetry is broken by working in a given strip bounded by $X_{i, i\pm 1}\, =\, 0$, there is a further redundancy in the causal diamond description. This is due to the fact that temporal evolution of $X_{ij}$ eventually leads us to $X_{ij}$. Thus the smallest domain in the causal diamond in which all the $X_{ij}$ variables occur precisely once is called non-redundant domain. Non-redundant domains are not unique as the initial slice (or equivalently, an initial choice of triangulation) is not unique. It is precisely the multiplicity of non-redundant domains  which lead to distinct ABHY realisations discovered and analysed in \cite{thomas-matte, pppp}. 
\item The discretized massless scalar field equation with a source precisely generate all the linear equations which leads to  convex realisation of $A_{n-3}$ in ${\cal K}_{n}^{+}$. 
\end{itemize}


\subsection{From Uncoloured to Coloured Causal Diamonds}\label{fuccd}
The causal diamonds (reviewed in section \ref{brcd}) give a novel perspective on ABHY realisations of the associahedron. There are many non-redundant domains inside the causal diamond. The non-redundant domains are the collection of vertices inside the causal diamond which is isomorphic to the set of  all the diagonals. ``Initial slice" in the non-redundant domain corresponds to a complete triangulation $T$ that generates the ABHY realisations via the constraints $s_{ij}\, =\, -\, c_{ij}\, \forall\, (i,j)\, \notin\, T^{c}$.

For the case of mostly black  triangulations  the simple minded mutation rules face obstructions. Hence we can ask if there are a different set of mutation rules which generate closed combinatorial poytopes. Rather remarkably, if we let the abstract structure of a causal diamond guide us, then it leads us to a set of mutation rules and a rather natural  definition of a (set of) positive geometries with multi-coloured facet structure. In order to adapt the causal diamond to mostly black triangulations, we let the vertices of causal diamond have one of two possible colours, namely black and red. As any mostly black  triangulation has all but one diagonals black, we start with any slice in the causal diamond which corresponds to a complete triangulation  and colour one of the vertices in it as red (see figure \ref{coloredcausalinislice}).
\begin{figure}
    \centering
    \includegraphics[scale=0.5]{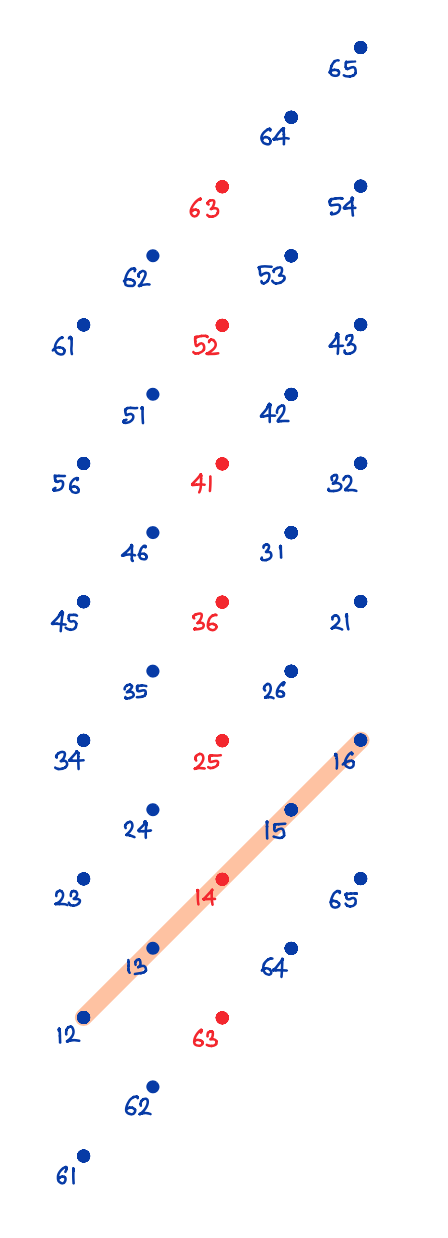}
    \caption{Colored Causal Diamond with initial slice $(13),(14),(15)$ where $(14)$ is red}
    \label{coloredcausalinislice}
\end{figure}
We then keep the mutation rules as they are for uncoloured causal diamonds so that a vertex evolves into another vertex with the same colour. \emph{We refer to the resulting diamonds as coloured causal diamonds}. Certain properties of coloured causal diamonds can immediately be noted.
\begin{itemize}
\item If any slice on the causal diamond which corresponds to a triangulation of the $n$-gon has a red vertex $X_{ij}$ then any vertex in the causal diamond will be red if $\vert k - l\vert\, =\, \vert i - j\vert\, \textrm{modulo}\, n$ and if it is in the same vertical line as $X_{ij}$. 
\item There are $\lceil \frac{n-3}{2} \rceil$ disconnected copies of coloured causal diamonds that contain all possible mostly black  triangulations. 
\end{itemize}
We now consider specific non-redundant domains inside the coloured causal diamond (described in section \ref{ccdcbk}). We refer to these domains as fundamental domains. The initial ``null slice" of the fundamental domain  has one red and all other black vertices. The final slice obtained within the fundamental domain ensures that combinatorially we have an associahedron polytope but whose facets have additional (red or black) labelling.

Remarkably enough, distinct fundamental domains (differentiated by the choice of colouring of vertices on the initial  slice) produce $\frac{n(n-3)}{2}$ number of combinatorial associahedra. Union of vertices belonging to all such associahedra contain (1) (multiple copies) of all the mono-chromatic and mostly black  triangulations and (2) do not contain any other species of triangulations (such as a triangulation with two diagonals being red). \emph{We refer to this (coloured) associahedron as associahedron block}.

As each associahedron block is identified with a fundamental domain in the coloured causal diamond, we have a realisation of each block in the (positive region) of kinematic space. One of our primary results in this paper is to show that the associahedron blocks are the positive geometries for ${\cal M}_{n}^{Y}$.
\subsection{ Coloured Causal diamonds and Convex Blocks in kinematic space}\label{ccdcbk}
We will now give a detailed analysis of the ideas outlined in section \ref{fuccd} and colour the causal diamonds in the kinematic space. We will then show that the coloured causal structure can be used to locate a class of polytopes in ${\cal K}_{n}^{+}$ which constitute the ``amplituhedron" for the amplitudes of two-scalar field theory.

We start with an ``intial configuration" where precisely one of the vertex is red. We then ``evolve the vertices'' following our mutation rule which does not change the colour.  After finite number of walks (or evolution in discrete time in the causal diamond) we reach the initial configuration again. If the initial configuration has a red vertex $X_{ij}$, then all the vertices $X_{kl}$ where $\vert\, l - k\, \vert\, =\, \vert\, j - i\, \vert$ modulo $n$ and where $X_{kl}$ is in the same vertical line as $X_{ij}$ will eventually be coloured red in this causal diamond. However, all $X_{kl}$ where $\vert\, l - k\, \vert\, \neq\, \vert j - i \vert\, \textrm{modulo}\, n$ will remain black. Hence there exists disconnected copies of coloured causal diamonds, on each of which the reference triangulation (or a lattice points on a null slice)  has precisely one red vertex $X_{ij}$. Clearly, the number of disconnected copies of coloured causal diamonds is   $\lceil\frac{n-3}{2}\rceil$.

We then consider  specific  ``non-redundant domains" (\cite{nima1912})  inside the coloured causal diamond each of which  contain all the vertices precisely once and the smallest possible number of  red vertices and each vertex $(k,l)$ (irrespective of it's colour) occur exactly once. Hence each of these domains is classified by the set of red vertices $\tilde{X}_{ij}, \tilde{X}_{i+1,j+1},\, \dots,\, \tilde{X}_{i+k, j + k}$ (where $k\, =\, \vert j - i\vert$ modulo $n$ and where we identify $n+1$ with $1$.) We refer to this class of non-redundant domains as \emph{fundamental domains} and denote them by ${\cal F}_{ij}$. Before giving some examples of the fundamental domains, we make a couple of observations :\\
{\bf (1)} Strictly speaking, the fundamental domain even with same set of red vertices are not unique and depend on initial choice of triangulation. However we will suppress the explicit dependence on the reference (initial) triangulation as  fundamental domains with same set of red vertices but different initial triangulations are simply different ABHY realisations of the same associahedron block.\\
{\bf (2)} We will always choose the initial triangulation containing red diagonal $(i,j)$ to be such that all the diagonals emanate in the same vertex $j$. Following examples of ${\cal F}_{ij}$ illustrate these ideas explicitly.

 One choice of fundamental domain in the case of $n\, =\, 5$ and $n\, =\, 6$ are given below and shown in figures \ref{fundomain5} and \ref{fundomain6}.
 
 \begin{figure}
     \centering
     \includegraphics[scale=0.4]{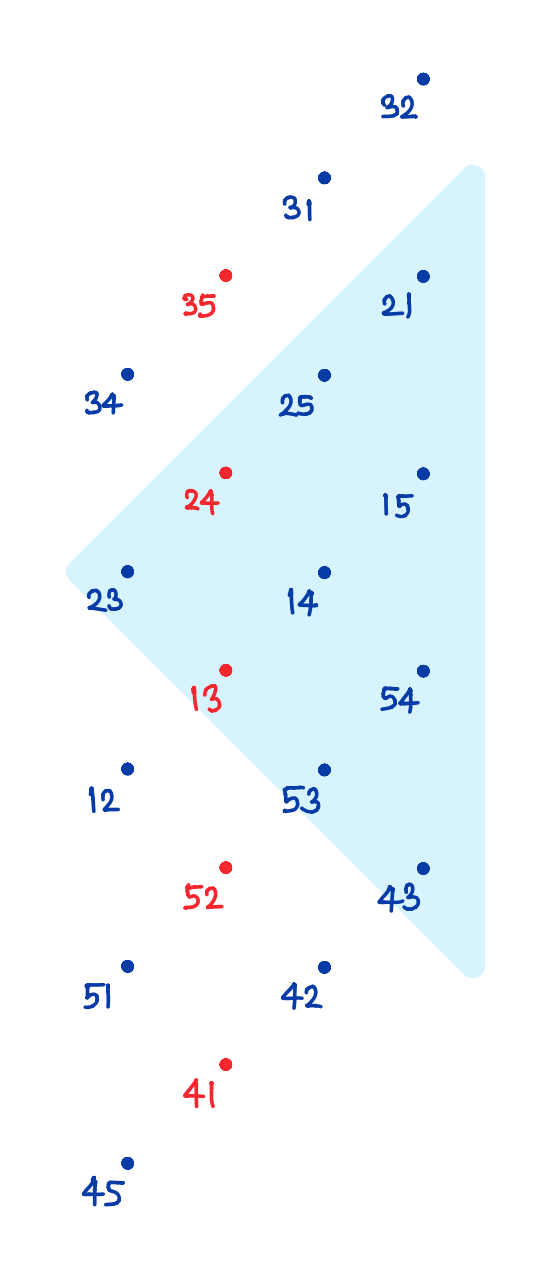}
     \caption{Fundamental domain $\mathcal{F}_{13}$ in $n=5$}
     \label{fundomain5}
 \end{figure}
 
 \begin{figure}
     \centering
     \includegraphics[scale=0.4]{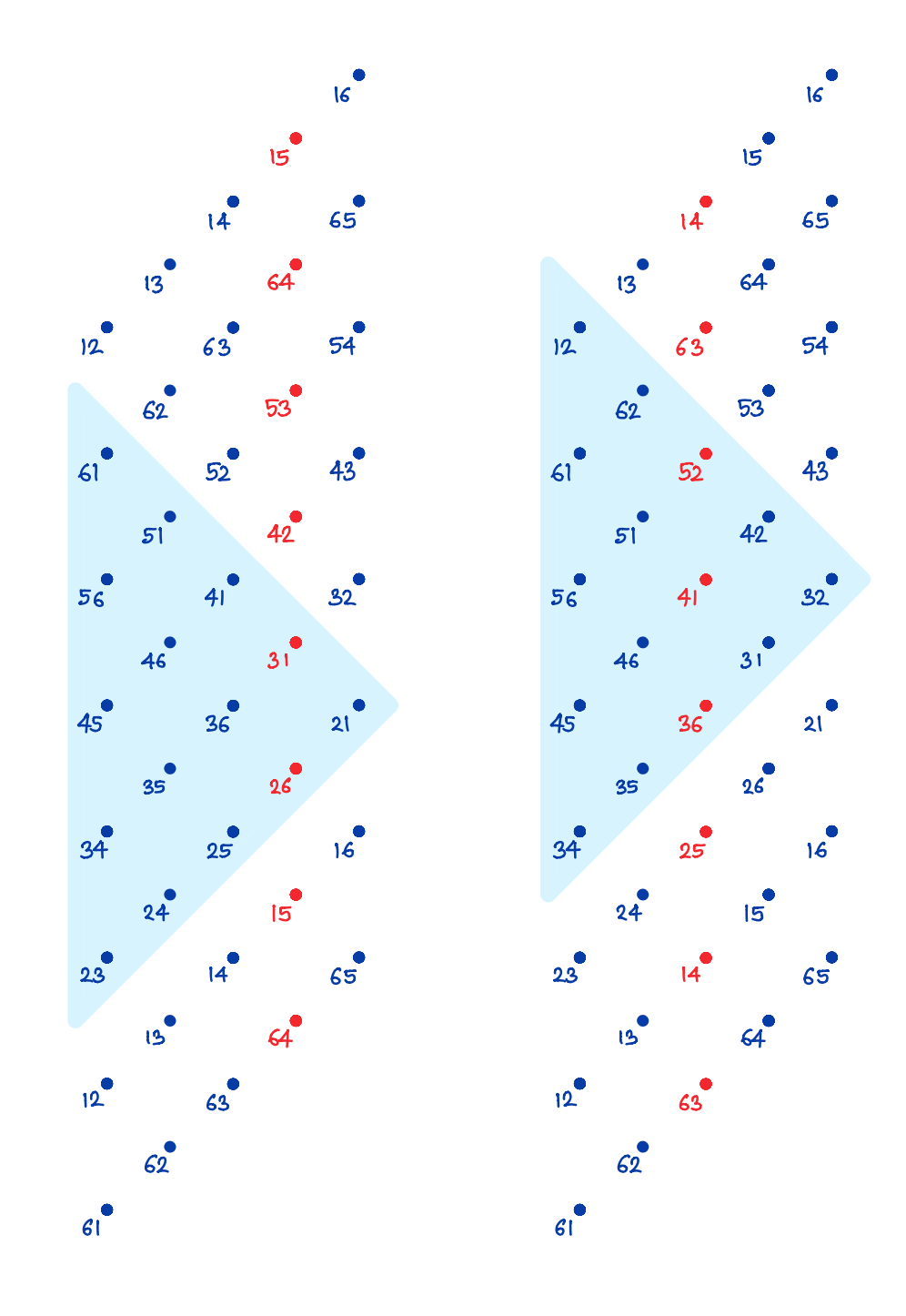}
     \caption{Fundamental domains $\mathcal{F}_{26}$ and $\mathcal{F}_{36}$ in $n=6$}
     \label{fundomain6}
 \end{figure}
 
 \begin{flalign}\label{mar9-1}
 \begin{array}{lll}
 \textrm{n = 5 fundamental domains}\, =\, \{\, {\cal F}_{13}, {\cal F}_{24}, {\cal F}_{35}, {\cal F}_{14}, {\cal F}_{25}\, \}\\
 \textrm{n=6 fundamental domains}\, =\, \{\, {\cal F}_{13}, {\cal F}_{24}, {\cal F}_{35}, {\cal F}_{46}, {\cal F}_{15}, {\cal F}_{26}\, \}\, \cup\, \{\, {\cal F}_{14} {\cal F}_{25},\, {\cal F}_{36}\, \}
 \end{array}
 \end{flalign}
 
 Several comments are in order.
 \begin{itemize}
 \item We note that  ${\cal F}_{ij}$ is a domain which contains all the diagonals of the $n$-gon such that the minimal number of them (in the case of $n\, =\, 5$, 2 vertices) are coloured red. 
 \item In the $n\, =\, 6$ case there are two disconnected copies of the coloured causal diamonds as evolution of a red vertex $(k,l)$ can never produce $(i,j)$ if $\vert j - i\vert\, \neq\, \vert k - l\vert\, \textrm{mod}\, 6$.
 \item The fundamental domains ${\cal F}_{25},\, {\cal F}_{36}$ are redundant, as the configuration of red and black diagonals in these domains is same as that in ${\cal F}_{14}$. We do not have to include them, but as we will see  this redundant inclusion  of ${\cal F}_{ij}$ for $\vert j - i\vert \, =\, \floor*{\frac{n}{2}}$ is useful for writing the final formula for (weighted) sum over canonical forms that produce the scattering amplitude. 
 \item As can be seen from the figure \ref{fundomain5} and figure \ref{fundomain6}, the examples of fundamental domains above are such that the initial and final triangulations emanate from a given vertex. But we can start with different initial triangulation. In the $n\, =\, 6$ case for example, we can consider the fundamental domain ${\cal F}_{36}$ in which the intial quiver(or the reference triangulation) is $T\, =\, \{\, 13, \textcolor{red}{36}, 35\, \}$. The corresponding Fundamental domain has precisely three red vertices $\{ \textcolor{red}{36},\, \textcolor{red}{14},\, \textcolor{red}{25}\, \}$ again and simply gives us a different realisation of the associahedron block $A_{2}^{(14)}$. 
 \end{itemize}
Let us look at the fundamental domains in $n=7$ case. We start with $\{\, (i,j),\, \dots,\, (j-1, i-1)\, \}$ where $\vert j - i \vert\, \geq\, 2$. Hence in $n=7$ case we have the following \emph{set} of \emph{disjoint} fundamental domains each of which defines a coloured causal diamond.\\
\begin{flalign}
\begin{array}{lll}
\textrm{A complete set of fundamental domains with ($\vert j - i \vert$ = 2 modulo 7})=\\
 \{\, {\cal F}_{13},\, {\cal F}_{24}, {\cal F}_{35},\, {\cal F}_{46},  {\cal F}_{57},\, {\cal F}_{16}, {\cal F}_{27}\, \}\nonumber\\
\textrm{A complete set of fundamental domains with ($\vert j - i\vert$ = 3 modulo 7)}=\\
 \{\, {\cal F}_{14},  {\cal F}_{36}, {\cal F}_{15}, {\cal F}_{37}, {\cal F}_{25}, {\cal F}_{47}, {\cal F}_{26}\,\}
\end{array}
\end{flalign}
We thus get, seven associahedron blocks ${\cal A}_{4}^{(i,j)}\vert i-j\vert\, =\, 2\, \textrm{modulo}\, 7$ with two non-adjacent red facets and seven more associahedron blocks ${\cal A}_{4}^{(m,n)}\vert m-n\vert\, =\, 3\, \textrm{modulo}\, 7$ each of which has three non-adjacent red facets.\\
\subsection{Mutation rules from Coloured Causal Diamonds}\label{mrccd}
As we have defined the associahedron block $A_{n-3}^{(i,j)}$ directly through coloured causal diamonds, their convex realisation in ${\cal K}_{n}^{+}$  is precisely given by the ABHY equations (with certain $X_{ij}$ replaced by $\tilde{X}_{ij}$ such that $\tilde{X}_{ij}\, \geq\, 0$).However, the mutation rules and the corresponding abstract definition of the combinatorial polytopes remained implicit in our construction. In this section we give the combinatorial definition of the associahedron blocks and then describe their realisations in kinematic space.

An associahedron block is an associahedron defined by a reference triangulation $T_{i,j}$ which has precisely one red diagonal $(i,j)$.  We now consider all possible triangulations obtained from $T_{i,j}$ via mutation such that ``walks along the causal diamond" (as reviewed in section \ref{ccdcp}) always produce triangulations which are mostly black . That is, they correspond to mutations which do not change the colour of the dianonal being mutated. 
However any mutation which maps a diagonal $(k,l)\, \rightarrow\, (i,j)$ is colour changing from 
\begin{flalign}
\begin{array}{lll}
{\bf (i)}\, \textrm{Red to black if $(k,l)$ labels the red vertex and $(i,j)$ does not and}\\
{\bf (ii)}\, \textrm{Black to red if $(k,l)$ labels the black vertex and $(i,j)$ is red.}\nonumber
\end{array}
\end{flalign}
We thus note that the (combinatorial) associahedron block depends on the choice of reference $(i,j)$. 

With these observations we define the associahedron block as follows :\\
An associahedron block associated to a reference diagonal $(ij)$ is combinatorially an associahedron in which the diagonals from the set $\{\, (i,j),\, \dots,\, (i + \vert j - i \vert - 1, j + \vert j - i\vert -1)\, \}$ are red while others are black. One example of the associahedron block in two dimensions is shown in the figure \ref{Assoblock514}.
\begin{figure}[h]
    \centering
    \includegraphics[scale=0.3]{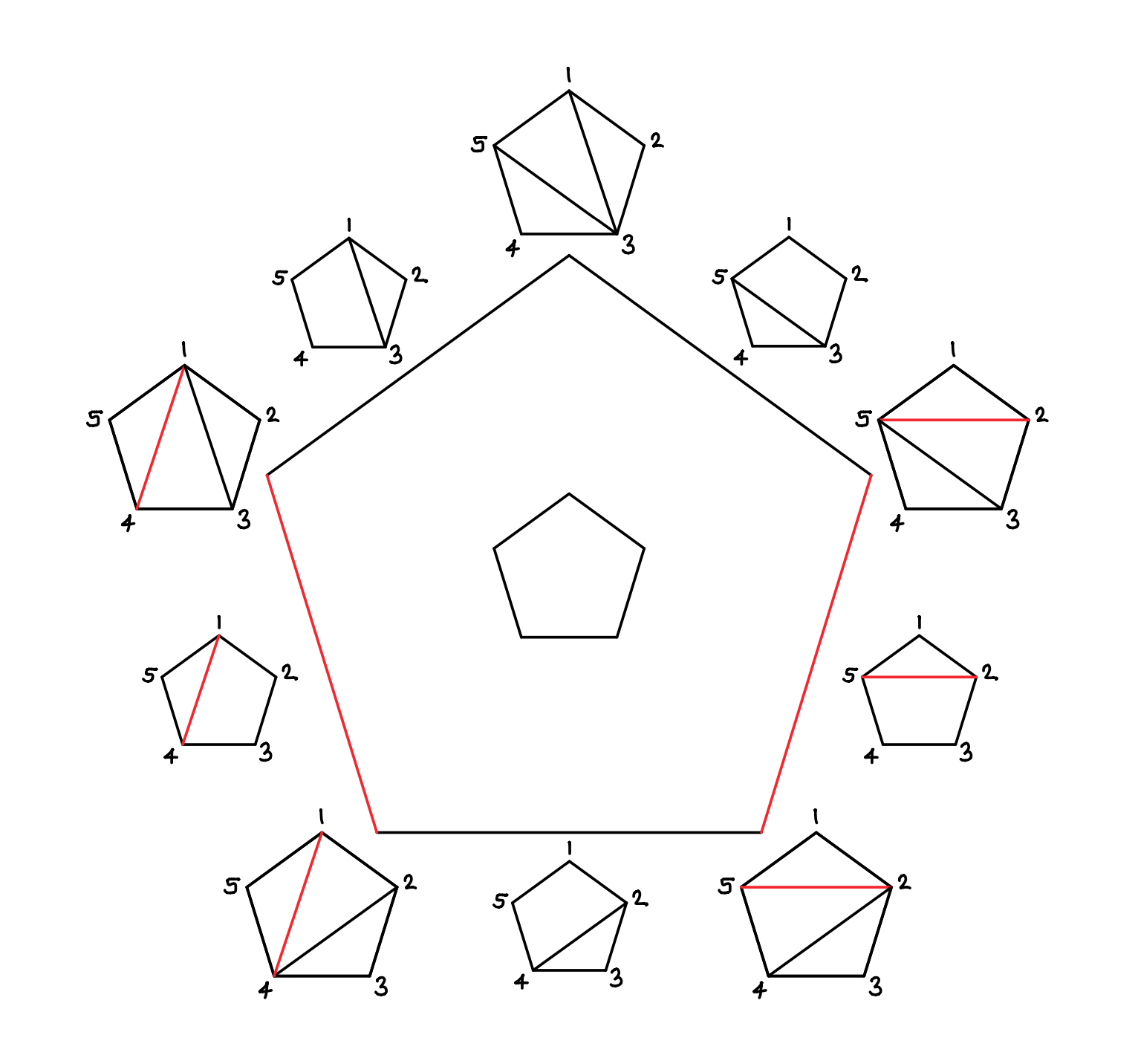}
    \caption{Associahedron block $A^{14}_{2}$}
    \label{Assoblock514}
\end{figure}
By inspection, any associahedron-block is a simple polytope that has the following properties. 
\begin{itemize}
\item Combinatorially, $A_{n-3}^{{\cal F}_{ij}}$ is an associahedron.
\item  Vertices of $A_{n-3}^{{\cal F}_{ij}}$ are in bijection with either mostly black  or uncoloured triangulation of an $n$-gon.
\item Co-dimension k faces of $A_{n-3}^{{\cal F}_{ij}}$ are in one to one correspondence with k-partial triangulation which is either mostly black  or uncoloured.
\item Each of its face is a product of a lower associahedron block and an associahedra.
\end{itemize}

We also note that the dynamical equations (ABHY constraints) governing the evolution inside the fundamental domain generates convex realisations of each of the blocks. In particular for $ n = 5$, the evolution equations inside the fundamental domain ${\cal F}_{13}$  generates the following realisation of an associahedron in kinematic space.
\begin{flalign}
\begin{array}{lll}
X_{13}\, +\, X_{24}\, -\, \tilde{X}_{14}\, =\, c_{13}\\
\tilde{X}_{14}\, +\, X_{35}\, =\, c_{14} +c_{24}\\
X_{13}\, +\, \tilde{X}_{25}\, =\, c_{13}+c_{14}
\end{array}
\end{flalign}
These equations admit solutions in the positive region ${\cal K}_{n}^{+}$ as long as $c_{ij}\, >\, 0$. 

In general, given a fundamental domain ${\cal F}_{ij}$ in which only $\{\, (i,j),\, \dots,\, (i + \vert j - i \vert - 1, j + \vert j - i\vert -1)\, \}$ are red,  the reference triangulation generated by initial configuration consists of $T\, =\, \{\, (j-2,j),\, \dots,\, (i,j),\, \dots,\, (j+2, j)\, \}$ (where without loss of generality we have assumed that if  $j - i\, =\, k$ mod $n$ then $k\, <\, n-k$.) Hence the ABHY equations which produce the convex realisation of the corresponding associahedron block can be written as follows. We  first define abstract variables ${\bf s}_{ij}$ as,
\begin{flalign}
{\bf s}_{ij}\, =\, {\bf X}_{ij}\, +\, {\bf X}_{i+1,j+1}\, -\, {\bf X}_{i, j+1}\, -\, {\bf X}_{i+1,j}
\end{flalign}
where ${\bf X}_{mn}\, =\, \tilde{X}_{mn},\, \textrm{if} (m,n)\, \in\, S_{(i,j)}$ and ${\bf X}_{mn}\, =\, X_{mn}$ otherwise. The convex realisation of the associahedron block is generated by 
\begin{flalign}
{\bf s}_{ij}\, =\, -\, c_{ij}\, \forall\, (i,j)\, \notin\, T^{c}
\end{flalign}
where $c_{ij}$ are positive constants such that $X_{mn}\, \geq\, m^{2}$ for $(m,n)\, \in\, S_{i,j}$ and $X_{mn}\, \geq\, 0$ otherwise. 

\subsection{From the Associahedron Blocks to Scattering Amplitudes}\label{absa}
In this section, we prove one of our primary results. We show that the $n$-point tree-level (and colour-ordered) scattering amplitudes in our theory when expanded to sub-leading order in $\lambda_{2}$ are a \emph{unique} weighted sum over canonical forms of all the associahedron blocks in a given dimension
As each block is an (ABHY) associahedron, there is a canonical form associated to it and a weighted sum of all the canonical forms when added to  the form associated to ABHY associahedron for $\phi_{1}^{3}$ theory produce the desired amplitude. We start with the example of two dimensional associahedron-blocks (i.e. $n\, =\, 5$ case) before proving the master formula in equations (\ref{master0}, \ref{master1}). 

In the $n\, =\, 5$ example, we have five associahedron blocks
\begin{flalign}
\{\, A_{2}^{(i,j)}\vert\, (i,j)\, \in\, \{ (1,3), (2,4), (3,5), (1,4), (2.5)\, \}\nonumber
\end{flalign}
each of which has two non-adjacent red facets that correspond to $\tilde{X}_{ij}\, =\, \tilde{X}_{i+1,j+1}\, =\, 0$. There is a  unique canonical form on ${\cal K}_{2}$ defined by each of these associahedron blocks, which are simple polytopes. \cite{nima1703}. As an example, we consider the planar scattering form defined by $A_{2}^{(1,3)}$. We recall that this is an ABHY associahedron two of whose facets correspond to the massive poles $X_{13}\, =\, X_{24}\, =\,m^{2}$.
\begin{flalign}
\begin{array}{lll}
\Omega_{n=5}^{A_{2}^{{\cal F}_{13}}}\, =\\
d\log\tilde{X}_{13}\, \wedge\, d\log X_{14}\, -\, d\log\tilde{X}_{13}\, \wedge\, d\log X_{35}\, +\, d\log X_{25}\, \wedge\, d\log X_{35}\, -\\ 
\hspace*{2.3in} \, d\log X_{25}\, \wedge\, d\log\tilde{X}_{24}\, +\, d\log X_{14}\, \wedge\, d\log\tilde{X}_{24}
\end{array}
\end{flalign}
The forms associated to other blocks can be written down similarly. It is  now easy to verify that the following sum of canonical forms restricted to the five associahedron blocks, 
\begin{flalign}
\Omega_{n=5}^{Y}\, :=\, \frac{1}{2}\, \sum_{(i,j)}^{5}\, \Omega_{n=5}^{(i,j)}\vert_{A_{2}^{(i,j)}} -\, \frac{1}{2}\, \Omega_{n=5}\vert_{A_{2}}
\end{flalign}
generate the contribution to the scattering amplitude at $\lambda_{1}\lambda_{2}^{2}$ order. That is, it produces all the terms which has one massless and one massive pole. And hence the following form generates the complete amplitude upto $O(\lambda_{2}^{2})$.
\begin{flalign}
\omega_{n=5}\, :=\, \lambda_{1}\lambda_{2}^{2}\, \Omega_{n=5}^{Y}\, +\, \lambda_{1}^{3}\, \Omega_{n=5}\vert_{{A}_{2}}
\end{flalign}
It is a happy surprise to see that there exists a linear combination of the canonical forms on the associahedron blocks  $\Omega_{n=5}^{Y}$ that produces the amplitude contribution \emph{at} $\lambda_{1}\lambda_{2}^{2}$ order. That is, it contains only those channels which have one massive and one massless pole. The existence of $\Omega_{n=5}^{Y}$ is thus necessary and sufficient to obtain the tree-level amplitude for independent couplings $\lambda_{1},\, \lambda_{2}$. However, as the number of particles increase, so do the number of assocaihedron blocks and then it is not at all obvious if there exists any formula for $\Omega_{n\, >\, 5}^{Y}$. We now derive a general formula for any $n$ and verify it using a few non-trivial examples.
\begin{lemma}
The following weighted sum of canonical forms restricted to their respective associahedron blocks generate the contribution to the amplitude at $\lambda_{1}^{n-4}\, \lambda_{2}^{2}$.  
\begin{flalign}
\Omega_{n}^{Y}\, :=\, \left[\, \sum_{\vert i - j\vert=2}^{\floor*{\frac{n}{2}}}\, \frac{1}{j-i}\, \sum_{{\cal F}_{ij}}\, \Omega_{n}^{(i,j)}(A_{n-3}^{{\cal F}_{ij}}) -\, \gamma\, \Omega_{n}^{\phi^{3}}(A_{n-3}) \right]
\end{flalign}
where 
\begin{flalign}
\gamma\, =\, \sum_{\textrm{sum over all diagonals, (i,j)}}\, \frac{1}{\vert j - i\vert}\, -\, (n-3)
\end{flalign}
\end{lemma}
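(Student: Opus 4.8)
The plan is to reduce the statement to a coefficient-counting problem over monomials and then verify two counts. By the expansion \eqref{march8-1}, the quantity to reproduce is ${\cal M}_n^{\textrm{CO}(2)}$, i.e. the sum over all mostly black triangulations $T$ of the $n$-gon of $\frac{1}{\tilde{X}_{pq}}\prod_{(ab)\in T,\,(ab)\neq(pq)}\frac{1}{X_{ab}}$, where $(pq)$ is the unique red (massive) diagonal. By the general result of Section~\ref{asdln}, the pull-back of the canonical form of each block $A_{n-3}^{{\cal F}_{ij}}$ to its ABHY realisation is $\omega(A_{n-3}^{{\cal F}_{ij}})=\sum_{T}\psi(T)$, summed over the vertices $T$ of the block, with $\psi(T)=\prod_{(ab)\in T}\frac{1}{{\bf X}_{ab}}$ and ${\bf X}_{ab}=\tilde{X}_{ab}$ exactly on the red facets $(ab)\in S_{(i,j)}$. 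First I would fix the orientations of all blocks and of $A_{n-3}$ compatibly, so that these functions add arithmetically (each $\psi$ a positive product of propagators), as in Section~\ref{asdln}. The lemma then follows once I show that in $\Omega_n^{Y}$ every mostly black monomial occurs with coefficient $1$ and every monochromatic (all-massless) monomial occurs with coefficient $0$.

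The first count rests on one combinatorial fact: any two distinct diagonals of the same length $k$, with $2\le k\le\floor*{\frac{n}{2}}$, cross, since for a shift $1\le m\le k-1$ the vertex $i+m$ lies strictly inside the arc cut off by $(i,i+k)$ while $i+k+m$ lies strictly outside it. Consequently every triangulation meets each $S_{(i,j)}$ in at most one diagonal, so a mostly black monomial with red diagonal $(pq)$ is a vertex of $A_{n-3}^{{\cal F}_{ij}}$ iff $(pq)\in S_{(i,j)}$, and then it occurs there exactly once. It then suffices to count, for a fixed length-$k$ diagonal $(pq)$, the weighted number of fundamental domains whose red set $S_{(i,j)}$ contains it. For $k<\frac{n}{2}$ there are exactly $k$ such domains, each carrying weight $\frac{1}{|i-j|}=\frac{1}{k}$; for $k=\frac{n}{2}$ the $\frac{n}{2}$ length-$k$ domains are redundant copies of one block but are all summed, each with weight $\frac{2}{n}$. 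In both cases the total is $1$, establishing claim (A).

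For the monochromatic monomials I would argue by complementary counting. A monochromatic $\psi(T)$ is a vertex of $A_{n-3}^{{\cal F}_{ij}}$ iff $T\cap S_{(i,j)}=\emptyset$, so its coefficient from the block sum is $\Gamma-\sum_{{\cal F}_{ij}}\frac{1}{|i-j|}\,\mathbf{1}[T\cap S_{(i,j)}\neq\emptyset]$, where $\Gamma=\sum_{{\cal F}_{ij}}\frac{1}{|i-j|}$. Because each triangulation meets each $S_{(i,j)}$ in at most one diagonal, the subtracted sum factorises over the $n-3$ diagonals of $T$, and the per-diagonal weighted count established above is exactly $1$; hence the subtracted sum equals $n-3$ for every $T$. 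Since the fundamental domains are in bijection with the diagonals of the $n$-gon (one length-$k$ domain per length-$k$ diagonal for $k<\frac{n}{2}$, and $\frac{n}{2}$ redundant domains for $k=\frac{n}{2}$), one has $\Gamma=\sum_{(ij)}\frac{1}{|j-i|}$, so the block coefficient of every $\psi(T)$ is the $T$-independent constant $\gamma=\Gamma-(n-3)$. This is precisely the constant multiplying $\Omega_n^{\phi^3}(A_{n-3})$, whose pull-back is $\sum_T\psi(T)$ with unit coefficients, so the subtraction cancels all monochromatic monomials, giving claim (B) and hence $\Omega_n^{Y}={\cal M}_n^{\textrm{CO}(2)}$.

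The main obstacle is the uniformity in the previous paragraph: that the spurious monochromatic coefficient is the \emph{same} constant for every triangulation is exactly what lets one cancel all of them by subtracting a single multiple of the pure $\phi^3$ form, and it hinges on the per-diagonal weighted domain count being $1$ independently of the diagonal's length. The delicate point is the edge case $k=\frac{n}{2}$, where the length-$k$ blocks coincide combinatorially yet are summed with redundant multiplicity; there the weight $\frac{1}{|i-j|}$ and the redundant inclusion of the ${\cal F}_{ij}$ must conspire to keep both counts equal to $1$. Verifying this case, together with checking the orientation/sign compatibility that permits adding the pulled-back forms as plain functions, is where the real care is required.
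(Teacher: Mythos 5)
Your proposal is correct and follows essentially the same route as the paper: the same count showing that each length-$k$ diagonal is coloured red in exactly $k$ fundamental domains, each entering with weight $\frac{1}{k}$ (including the redundant $k=\frac{n}{2}$ domains with weight $\frac{2}{n}$), which gives unit coefficient to every mostly black channel, and the same identification of the uniform monochromatic over-count $\gamma=\sum_{(ij)}\frac{1}{|j-i|}-(n-3)$; the only presentational difference is that the paper fixes $\gamma$ by an ``assume $\Omega_n^Y$ exists and strip the colours'' consistency argument, whereas your complementary counting computes the monochromatic coefficient directly and thereby actually \emph{proves} the $T$-independence that the paper asserts, so your version is if anything slightly more self-contained. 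One wording caveat: your blanket claim that ``any two distinct diagonals of the same length $k$ cross'' is false as literally stated (e.g.\ $(1,3)$ and $(5,7)$ in an octagon do not cross); what your shift argument establishes, and all that the proof needs, is that two diagonals related by a shift $1\le m\le k-1$ --- i.e.\ any two elements of a single red set $S_{(i,j)}$ --- cross, so each triangulation meets each $S_{(i,j)}$ at most once and the counting goes through unchanged.
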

\begin{proof}
When we sum over fundamental domains ${\cal F}_{ij}$ and span over all the associahedron-blocks, each vertex (complete triangulation) is covered at least once. These includes monochromatic triangulations where none of the diagonals are red. Hence finding an $\Omega_{n}^{Y}$ amounts to showing that there is a \emph{unique} choice of co-efficients which removes the contribution of all the mono chromatic triangulations while simultaneously ensuring that each mostly black  triangulation contributes precisely once. We determine these co-efficients as follows.\\
In the fundamental domain, ${\cal F}_{ij}$, all the vertices co-ordinatized by $I\, =\, \{\, (i,j),\,(i+1, j+1)\, \dots,\, (j-1, 2j\, -\, i\, -1)\, \}$ are red and the remaining vertices are black. Hence  given ${\cal F}_{ij}$, any triangulation which has one of the diagonals in $I$ is a mostly black  triangulation and all the other triangulations (which contain no elements from $I$) are monochromatic triangulations. Now as we scan over all of the fundamental domains, the vertex $(i,j)$ is labelled red in $\vert\, j\, -\, i\, \vert$ number of fundamental domains. Hence any mostly black  triangulation ( i.e. vertex of the associahedron  block) containing  $\tilde{X}_{ij}$ occurs $\vert\, j - i\, \vert$ times. We thus see that the \emph{unique} weighted sum over the canonical forms $\Omega_{n}^{{\cal F}_{ij}}\vert_{{\cal A}_{n-3}^{ij}}$  which generates all the terms in the amplitude that have one massive and $n-4$ massless poles is given by,
\begin{flalign}
\sum_{\vert i - j\vert\, =\, 2}^{\floor*{\frac{n}{2}}}\, \frac{1}{\vert\, i - j\, \vert}\sum_{{\cal F}_{ij}\, \vert \vert i - j \vert\, \textrm{modulo}\, n}\, \Omega_{n}^{{\cal F}_{ij}}
\end{flalign}
The uniqueness of the weights simply follows by inspection. As all the $(i,j)$ for which  $\vert j - i\vert$ modulo $n$ is the same occur with frequency $\vert j - i\vert$, there is no other choice of weights for which we get unit residue over all channels.

However the above sum will contain contribution over mono-chromatic triangulations which are in bijection with purely massless channels with varying weights. We now argue that all the mono-chromatic triangulations in fact occur with equal frequency $\gamma$ and hence can be removed by simply subtracting $\gamma\, \Omega_{n}({A_{n}})$ from the above form. This then defines $\Omega_{n}^{Y}$ and completes our proof.

Let us for a moment assume that there exists such a $\Omega_{n}^{Y}$ . Then 
\begin{flalign}\label{ap5-0}
\sum_{\vert, i - j\, \vert = 2}^{\frac{n}{2}}\, \frac{1}{\vert, i - j\, \vert}\sum_{{\cal F}_{ij}\vert (j-i)\, \textrm{modulo}\, n}\, \Omega_{n}^{{\cal F}_{ij}}({\cal A}_{n-3}^{ij})\, -\, \Omega_{n}^{Y}
\end{flalign}
contains is the form of ABHY associahedron (in which all the vertices correspond to massless poles with residue one). 
Let us now remove all the red-colouring from the mostly black triangulations turning each such mostly black  triangulation into a vertex of the (uncoloured) associahedron. This would imply that 
\begin{flalign}\label{ap5-1}
\Omega_{n}^{{\cal F}_{ij}}\, \rightarrow\, \Omega_{n}\, \forall\, (i,j)\\
\Omega_{n}^{Y}\, \rightarrow\, (n-3)\, \Omega_{n}
\end{flalign}
where the second equation follows from the fact that there are $n-3$ mostly black  triangulations that map to monochromatic triangulation when we strip off the colour. As the map of mostly black  to monochromatic triangulation does not effect the massless channels, we have the following result. 
\begin{flalign}
\sum_{\vert i - j\vert\, =1}^{\floor*{\frac{n}{2}}}\, \frac{1}{\vert i - j\, \vert}\, \sum_{{\cal F}_{ij}}\, \Omega_{n}(A_{n})\, -\, (n-3)\, \Omega_{n}(A_{n})\, =\, \gamma\, \Omega_{n}
\end{flalign}
and hence $\gamma$ is fixed to 
\begin{flalign}
\gamma\, =\, \sum_{\vert\, i - j\, \vert =1}^{\floor*{\frac{n}{2}}}\, \frac{1}{\, \vert i - j\, \vert}\, \sum_{{\cal F}_{ij}}\, 1 - (n-3)
\end{flalign}

\end{proof}
\vspace*{-0.3in}
We consider  the equation \eqref{ap5-0} to be a striking result. It is apriori not obvious that the scattering amplitude ${\cal M}_{n}^{\textrm{CO} (2)}$ (defined in equation \eqref{march8-1}) should be a sum over canonical forms of associahedron blocks. Hence we view the result of the above lemma as a non-trivial evidence in support of the universality of the Amplituhedron program. 

One can verify the validity of the master formula in certain lower dimensional cases. Here we give two explicit examples of $n = 6$ and $n = 7$. 
\begin{flalign}
\Omega_{n=6}^{Y}\, =\, \frac{1}{2}\sum_{{\cal F}_{ij}\vert j - i\vert \textrm{even mod 6}}\, \Omega_{6}^{{\cal F}_{ij}}\, +\, \frac{1}{3}\sum_{{\cal F}_{ij}\vert j - i\vert \textrm{odd mod 6}}\, \Omega_{6}^{{\cal F}_{ij}}\, -\, \Omega_{\phi^{3}}\\
 \Omega_{n=7}^{Y}\, =\, \frac{1}{2}\sum_{{\cal F}_{ij}\vert j - i\vert \textrm{2 mod 7}}\, \Omega_{6}^{{\cal F}_{ij}}\, +\, \frac{1}{3}\sum_{{\cal F}_{ij}\vert j - i\vert \textrm{3 mod 7}}\, \Omega_{6}^{{\cal F}_{ij}}\, -\, \frac{11}{6} \,\Omega_{\phi^{3}}
\end{flalign}
It can be in fact verified that these formulae indeed produce the correct scattering amplitudes.

We thus see that the remarkable richness of such an ``doubled" causal diamond structure (where each vertex can be either black or red) lies in  the generalisation of the above construction to higher dimensional polytopes.\\

\subsection{Factorisation on Massive Channels}\label{fmc}
It was shown in \cite{abhy1711} that combinatorial factorisation of an associahedron implies the geometric factorisation of it's convex realisation in ${\cal K}_{n}^{+}$. The geometric factorisation  can be quantified as,
\begin{flalign}
A_{n-3}\, \vert_{X_{ij}\, =\, 0}\, =\, A_{L}(i, i +1,\, \dots\, \overline{I})\, \times\, A_{R}(1,\, \dots,\, i-1\, I,\, \dots,\, n)
\end{flalign}
$A_{L},\, A_{R}$ are ABHY associahedra of dimensions $\vert j - i -2\vert$ and $n - (j - i)\, -\, 1$ respectively. 
ABHY proved that the geometric factorisation of the associahedron leads to the factorisation of amplitudes through the residue formula for the canonical forms.
\begin{flalign}
\textrm{Res}_{X_{ij}\, =\, 0}\, \Omega_{n-3}(A_{n-3})\, =\, \Omega(A_{L})\, \wedge\, \Omega(A_{R})
\end{flalign}
where $\Omega(A_{L}),\, \Omega(A_{R})$ are canonical forms associated to the lower dimensional associahedra $A_{L},\, A_{R}$ respectively. These results generalise rather trivially to the associahedron blocks reaffirming one of the striking results of the amplituhedron program :  Locality and Unitarity are consequences of the underlying positive geometry.

As we saw in an earlier section, a co-dimension one boundary of an associahedron block is either a product of two lower dimensional associahedron blocks, or an associahedron block and an associahedron or product of two associahedra.\footnote{A vertex is zero dimensional associahedron and hence we identify an associahedron with cartesian product of the associahedron with a zero dimensional associahedron.} It can be verified that exactly as in the case of associahedron \cite{abhy1711} this combinatorial factorisation of the associahedron-block implies it's geometric factorisation as well.  The geometric factorisation can be explicitly written as 
\begin{flalign}
\begin{array}{lll}
A_{n-3}^{{\cal F}_{ij}}\, \vert_{X_{mn}\, =\, 0}\, &=\,A_{L}^{{\cal F}^{L}_{ij}}(m,m+1, \dots,\, \overline{I})\, \times\, A_{R}^{{\cal F}_{ij}^{R}}(1,\, \dots,\, m-1,\, I\, \dots,\, n)\, \textrm{or}\\
&=\,A_{L}^{{\cal F}_{ij}}(m,m+1, \dots,\, \overline{\textcolor{red}{I}})\, \times\, A_{R}^{{\cal F}_{ij}}(1,\, \dots,\, m-1,\, \textcolor{red}{I}\, \dots,\, n)
\end{array}
\end{flalign}
In the first case, we are on a boundary associated to massless pole, $X_{ij}\, =\, 0$ and $A_{L/R}^{{\cal F}_{ij}^{L/R}}$ corresponds to the lower dimensional associahedron blocks. ${\cal F}_{ij}^{L/R}$ indicates the set of red diagonals which are to the left (right) of the diagonal $(m,n)$. If ${\cal F}_{ij}^{L/R}$ is empty then the corresponding associahedron block is simply a convex ABHY associahedron. In the second case, we consider the boundary facet associated to a massive pole. In this case, both the factorized positive geometries are associahedra whose canonical forms are the scattering amplitudes involving one massive and remaining massless particles. The fact that we are on $X_{mn}\, =\, m^{2}$ hyper-plane automatically implies that the intermediate particle $I$ is massive.

From the geometric factorisation of the associahedron block, we get the factorisation property of the amplitudes.
\begin{flalign}
\begin{array}{lll}
\textrm{Res}_{X_{mn}\, =\, 0}\, \Omega_{n-3}(A_{n-3}^{{\cal F}_{ij}}) =\, \Omega_{L}(A_{L}^{{\cal F}_{ij}^{L}}\, \wedge\, \Omega_{R}(A_{R}^{{\cal F}_{ij}^{R}})\nonumber\\
\textrm{Res}_{X_{mn}\, =\, m^{2}}\, \Omega_{n-3}(A_{n-3}^{{\cal F}_{ij}}) =\, \Omega_{L}(A_{L})\, \wedge\, \Omega_{R}(A_{R})
\end{array}
\end{flalign}
The residue of the $n-3$ canonical form on black or red facets equals product of lower forms associated to associahedron blocks or associahedra respectively.\\
We thus see that just as in the case of massless scattering amplitudes the locality and unitarity of the S-matrix (even in the presence of intermediate massive states), follows from the combinatorial factorisation property of the associahedron!

\section{The Scattering form for an Effective Field Theory : Accordiohedra in low energy limit}\label{sfeft}
In this section we study the $m\, \rightarrow\, \infty$ limit of the associahedron blocks. We claim that every block projects onto a  \emph{set of} simple polytopes in this limit as the block ``moves towards infinity in various possible directions". This gives us a natural perspective on emergence of positive geometries  from associahedra in the effective field theory when massive field is integrated out! The ABHY realisations basically guarantees that a specific direction dependent $m\, \rightarrow\, \infty$ limit is a bijection from the associahedron block onto an the polytope that we will refer to as ``projected accordiohedron".

Let us first consider the fundamental domain ${\cal F}_{13}$ whose initial quiver corresponds to  the mostly black  triangulation $\{\, \textcolor{red}{13}, 35\, \}$. The ABHY realisation of the corresponding associahedron block is given by,
\begin{flalign}\label{abhyn513}
\tilde{X}_{13}\, +\, \tilde{X}_{24}\, -\, X_{14}\, =\, c_{13}\\
X_{14}\, +\, X_{25}\, -\, \tilde{X}_{24}\, =\, c_{14}\\
X_{35}\, +\, X_{14}\, -\, \tilde{X}_{13}\, =\, c_{35}
\end{flalign}
where all $c_{ij}\, >\, 0$.

For the purpose of this section, we will\\
{\bf (i)} Write the ABHY constraints in terms $X_{ij}$ instead of $\tilde{X}_{ij}\, =\, X_{ij}\, -,m^{2}$ and\\
{\bf (ii)} Assume that all the $c_{ij}\, >\, m^{2}$. This non-trivial bound on $c_{ij}$ ensures that we can scan the hyper-planes satisfying ABHY constraints where one of the $m^{2}\, >X_{ij}\, >\, 0$.

Equation \eqref{abhyn513} can be written as,
\begin{flalign}
X_{13}\, +\, X_{24}\, -\, X_{14}\, =\, c_{13} + 2 m^{2}\\
X_{14}\, +\, X_{25}\, -\, X_{24}\, =\, c_{14} - m^{2}\\
X_{35}\, +\, X_{14}\, -\, X_{13}\, =\, c_{35} - m^{2}
\end{flalign}
We now consider the domain in positive region of kinematic space (but outside $A_{2}^{F_{13}}$) parametrized by $X_{13}\, <<\, m^{2}$. In this case, we immediately see that the above equations reduce to,
\begin{flalign}
X_{14}\, +\, X_{35}\, =\, c_{35}\, -\, m^{2}\\
X_{25}\, =\, c_{13} + c_{14} + m^{2}\\
X_{24}\, =\, X_{14}\, +\, c_{13}\, +\, 2 m^{2}
\end{flalign}
We thus see that for $\overline{c}_{35}\, =\, c_{35}\, -\, m^{2}$ we have a one dimensional accordiohedra given by
\begin{flalign}
X_{14}\, +\, X_{35}\, =\, \overline{c}_{35}
\end{flalign}
$X_{25}$ is a positive constant and the pole $X_{24}\, =\, 0$ is ruled out as $X_{14}\, \geq\, 0$. We thus see that viewed from ``infinite distance" in kinematic space when $)\, \leq\, X_{13}\, <<\, m^{2}$, the associahedron block $A_{2}^{{\cal F}_{13}}$ ``projects onto" a one dimensional accordiohedron with boundaries as $X_{14},\, X_{35}\, =\, 0$ respectively.

We now give another example : Consider  $A_{2}^{F_{35}}$ associated to the fundamental domain ${\cal F}_{35}$. When this block is realised in $X_{13}, X_{14}$ positive quadrant, with $\tilde{X}_{14} = 0,\, \tilde{X}_{35} = 0$ moving towards infinity in $X_{14}$ direction such that for $X_{14}\, <<\, m^{2}$ we get $X_{13}, X_{24}$ accordiohedron. This can be seen as follows. The ABHY equations are,
\begin{align}\nonumber
X_{14}+X_{35} &=\, c_{14}+c_{24}\endline
X_{13}\, +\, X_{25}\, &=\, c_{13}+c_{14}\endline
X_{13}\, +\, X_{24}\, -\, X_{14} &= c_{13} .
\end{align}
Where $c_{14}+c_{24}> 2m^{2}$, as we want the two dimensional space cut out by these equations to intersect $X_{14}=m^2$ with other $X_{ij}>0$ and $X_{35}=m^2$ with other $X_{ij} > 0$. If $X_{14}\, <<\, m^{2}$ we see from the first equation that $X_{35}\, \approx \, c_{14} + c_{24} > 2m^2 $, while the second and the third equations imply, 
\begin{align}
X_{13} + X_{24}\, &=\, c_{13},
\end{align}
and $X_{25}= X_{24} + c_{14}$. Therefore, region with $X_{14}\, <<\, m^{2}$ where $X_{35} >m^2$ and all other $X_{ij}>0$ is the one dimensional accordiohedron $(24, 13)$ given by,
\begin{align}
X_{13} + X_{24}\, &=\, c.
\end{align}
\begin{figure}[H]
    \centering
    \includegraphics[scale=0.4]{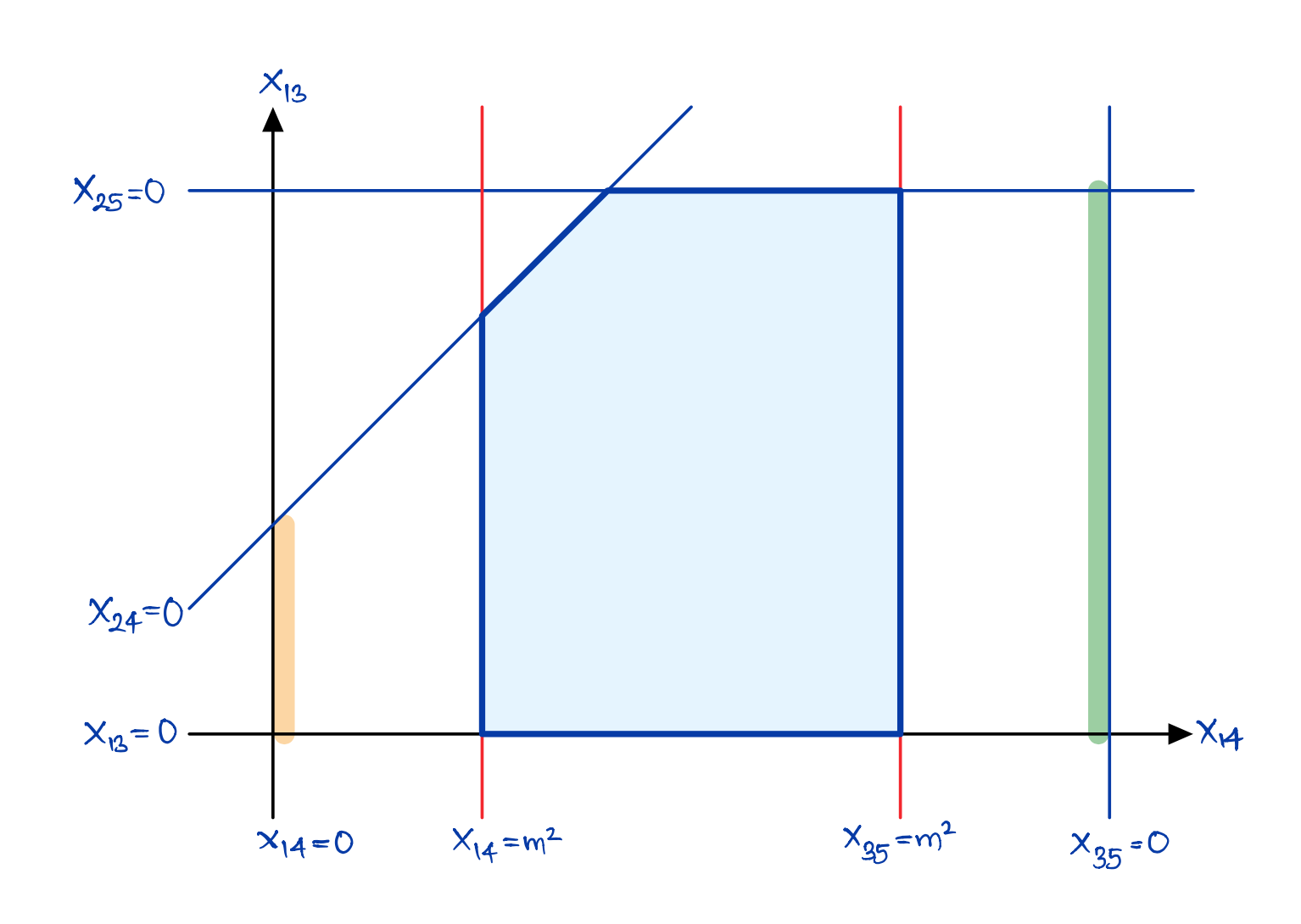}
    \caption{Acoordiohedra from Associahedra. The Orange strip is the region with $X_{14}<< m^{2}$ where we get the accordiohedron $(24,13)$, and the green strip is the region with $X_{35}<<m^{2}$ where we get the accordiohedron $(13,25)$ }
    \label{ACCasbndofAS}
\end{figure}

On the other hand, if $X_{35}\, <<\, m^{2}$ then using similar logic we see that,
\begin{flalign}
X_{13}\, +\, X_{25}\, =\, \textrm{const}
\end{flalign}
is the accordiohedron that one projects onto. This is because in this kinematic regime, we obtain 
\begin{flalign}
X_{13}\, +\, X_{25}\, =\, c_{13}\, +\, c_{14}
\end{flalign}
and $X_{24} = X_{25} + c_{24}$. Therefore, region with $X_{35}\, <<\, m^{2}$ where $X_{14} >m^2$ and all other $X_{ij}>0$ is the one dimensional accordiohedron $(13, 25)$ given by,
\begin{align}
X_{13} + X_{25}\, &=\, c.
\end{align}
As there are Five blocks in two dimensions, in the $m^{2}\, \rightarrow\, \infty$ limit, we get ten one dimensional accordiohedra. Each of the following accordiohedra occurs twice :
\begin{flalign}
\{ (13),\, (24)\, \},\, (13, 25),\, (14, 25),\, (14, 35),\, (24, 35)\, \}\nonumber
\end{flalign}

Now let's look at an example at $n=6$. Consider the three dimensional block associated to ${\cal F}_{14}$.  We consider the geometric realisation of the three dimensional block in the kinematic space given by,
\begin{align}
X_{13}+X_{26} &= c_{13}+c_{14}+c_{15} & X_{35}+X_{14}-X_{15} & = c_{14}+c_{24} \endline
X_{14}+X_{36} &= c_{14}+c_{15}+c_{24}+c_{25} & X_{25}+X_{13}-X_{15} &= c_{13}+c_{14} \endline 
X_{15} + X_{46} &= c_{15}+c_{25}+c_{35} & X_{24}+X_{13}-X_{14}&= c_{13}.
\end{align}
\begin{align}
X_{13}&>0 & X_{24}&>0 & X_{35}&>0 \endline
X_{46}&>0 & X_{15}&>0 & X_{26}&>0 \endline
X_{14}&>m^{2} & X_{25}&>m^{2} & X_{36}&>m^{2},
\end{align}
such that  $c_{14}+c_{15}+c_{24}+c_{25}  > 2 m^{2}, c_{13}+c_{14}  > m^2.$ 
If we take $X_{14}\, <<\, m^{2}$ we get the following  equations,
\begin{align}
X_{26} &\approx X_{24}+ c_{14}+c_{15} & X_{35}  & \approx X_{15} + c_{14}+c_{24} \endline
X_{36} & \approx c_{14}+c_{15}+c_{24}+c_{25} > 2m^{2} & X_{25} & \approx X_{15}+X_{24} + c_{14} > 0 \endline 
X_{15} + X_{46} &= c_{15}+c_{25}+c_{35} & X_{24}+X_{13} & \approx c_{13}.
\end{align}
 Therefore the region where $X_{14} << m^{2}$ and all other $X_{ij} > 0$  is the two dimensional accordiohedra $ (24,13) \times (15,46) $, given by 
 \begin{align}
X_{15} + X_{46} &= c_{15}+c_{25}+c_{35} & X_{24}+X_{13} & = c_{13}.
\end{align}

In general, if in a given $n-3 $ dimensional associahedron block $\tilde{X}_{ij}, \tilde{X}_{i+1, j+1},\, \tilde{X}_{i + (j - i), j + (j - i)}$ are the red vertices then taking $X_{ij}\, <<\, m^{2}$, we get a $n-4$ dimensional accordiohedra parametrized by facets which are generated by (colorless) diagonals that don't  intersect the diagonal $(ij)$.

\section{EFT Amplitude from Projected Accordiohedra}\label{eapa}
Tree-level amplitudes with mixed interactions of massless scalars have been analysed previously in the literature \cite{mrunmay3}. At each order in perturbative expansion, the positive geometry associated to the scattering amplitude generated by $\lambda \phi_{1}^{3}\, +\, g\, \phi_{1}^{4}$ interaction is an accordiohedron \cite{pilaud-1702}. In particular at order $g$ in the quartic coupling, each accordiahedron ${\cal AC}_{n-4}({\cal D})$ is a simple polytope which is defined using a reference dissection ${\cal D}$ of an $n$-gon into $n-4$ triangles and one quadrilateral. It is thus a natural question to ask if there is a bijection between  ``projected accordiohedra" ${\cal PAC}_{n-4}[(i,j)]$ that emerge in  the ``low energy"  regions of kinematic space and the accordiohedra ${\cal AC}_{n-4}({\cal D})$.

Naive expectation for such a bijection is the following :  Consider the reference triangulation $T_{ij}$ defined by the reference triangulation in ${\cal F}_{ij}$. Consider e.g. one of the projected accordiohedra, say ${\cal PAC}_{n-4}[(i,j)]$ that emerges in $X_{ij}\, <<\, m^{2}$ region.
We can interpret the low energy limit as projecting $T_{ij}$ on to a reference dissection which consists of $n-4$ triangles and one quadrilateral obtained by removing the $(i,j)$ diagonal. The dissection $T_{ij}/(i,j)$ can be used to define the ${\cal AC}_{n-4}(T_{ij}/(i,j))$ as in \cite{pilaud-1702}. We may expect the resulting polytope to be  isomorphic to the projected accordiohedron. However this expectation turns out to be wrong. 
This can immediately by checked by specific examples.
\begin{itemize}
\item Consider the $n\, =\, 5$ case where we obtained 5 associahedron blocks and 10 accordiohedra in the low energy limit. Consider one of the blocks say ${\cal A}_{2}^{{\cal F}_{13}}$ in which $\tilde{X}_{13}$ and $\tilde{X}_{24}$ are red . The initial quiver in ${\cal F}_{13}$ is $\{(1,3)_{R}, (3,5)\, \}$. Taking $X_{13}\, <<\, m^{2}$ amounts to deleting the $(1,3)$ diagonal and there by obtaining the reference dissection $(3,5)$. However, as can be readily verified using the definition of ${\cal AC}_{1}$,  the compatible set of  diagonals is $\{(3,5), (2,4)\, \}$ instead of $\{\, (1,4), (3,5)\, \}$. That is, the ${\cal AC}_{1}$ defined using $(3,5)$ as a reference is a one dimensional polytope with vertices at $X_{35}\, =\, X_{24}\, =\, 0$ as opposed to the ${\cal PAC}_{1}$ we obtained with vertices at $X_{35}\, =\, 0$ and $X_{14}\, =\, 0$.
\item In the $n\, =\, 6$ case, the difference between ${\cal AC}_{n-4}(T_{ij}/(i,j))$ and ${\cal PAC}_{n-4}$ which are located in $X_{ij}\, <<\, m^{2}$ is even more pronounced.  If we consider the associahedron block $A_{3}^{{\cal F}_{14}}$ and anticipate that the projection onto the $X_{14}\, <<\, m^{2}$ region is equivalent to considering the reference dissection $\{(1,3), (1,5)\, \}$ then we see an immediate contradiction. Starting with ${\cal D}\, =\, \{\, (1,3),\, (1,5)\, \}$ We obtain a two dimensional ${\cal AC}_{2}({\cal D})$ which is a  pentagon with five facets labelled by
\begin{flalign}
\{ (1,3), (2,6), (1,5), (2,6), (4,5)\, \}\nonumber
\end{flalign} 
However as there is no diagonal which does not intersect all of these five facets, this particular accordiohedron is not a (co-dimension one) face of the three dimensional associahedron. On the other hand, ${\cal PAC}_{2}[(1,4)]$ sitting in the $X_{14}\, <<\, m^{2}$ region, is by construction (parallel) to the $X_{14}\, =\, m^{2}$ facet of the associahedron block $A_{3}^{{\cal F}_{14}}$. 
\end{itemize}
In summary, we see that,\\
{\bf (i)} For all the accordiohedra which are in fact projections on to co-dimension one facets of an associahedra, we need to identify the reference dissection so that they are projected accordiohedra emerging in the low energy limit.\\ 
{\bf (ii)} And in fact, there is a class of accordiohedra that are defined using specific reference dissections consisting of $(n-4)$ triangles and one quadrilateral which can not be realised as projected accordiohedra. 

The second example cited above pertains to a larger structure of an accordiohedron. Namely, not all accordiohedra are facets of higher dimensional associahedra. It is hence puzzling to see that both the descriptions, one via accordiohedron as positive geometry and one via projected facets ${\cal PAC}_{n-4}$ of the associahedron blocks in the low energy limit can produce the same tree-level amplitude.

Resolution to this contradiction was already hinted at in \cite{mrunmay3} where it was noticed in certain examples that any combinatorial accordiohedron ${\cal AC}_{n-4}({\cal D})$ which can not be realised as a co-dimension one facet of a higher dimensional associahedron does not contribute to the  scattering amplitude! We prove this statement in  appendix \ref{OW}.

We will now show that, given an ${\cal F}_{ij}$ with red vertices co-ordinatized by ${\cal S}_{ij}\, =\, \{\, (i,j),\, \dots$ $\dots\, (i + \vert j - i\vert, j\, +\, \vert j - i\vert)\, \}$, any one of the projected accordiohedron  ${\cal PAC}_{n-4}[(p.q)]$ that emerge in $\{\, X_{p,q}\, <<\, m^{2}\, \vert\, (p.q)\, \in\, {\cal S}_{ij}\, \}$ region  can indeed be realised as ${\cal AC}_{n-4}({\cal D}_{p,q})$. Where ${\cal D}_{p ,q}$ is defined as follows.\\
\begin{flalign}\label{dmneqn}
D_{p.q}\, =\, \{\, X_{p,p-2},\, \dots,\, X_{p,q+1},\, X_{p+1, q},\, X_{p+2,q},\, \dots,\, X_{p+2,q+3},\, \dots,\, X_{p,p-2}\, \}
\end{flalign}
Before proving this claim we first verify it in a couple of examples.
\begin{itemize}
\item In the case of $n\, =\, 5$ and the fundamental domain ${\cal F}_{13}$ if we consider a reference dissection as $\{\, (1,3)_{\textrm{R}}, (1,4)\, \}$ then the resulting one dimensional projected accordiohedron in $X_{13}\, <<\, m^{2}$ domain is precisely a polytope with vertices labelled by $\{\, (1,4),\, (3,5)\, \}$.
\item In the same way in the case of $n\, =\, 6$ and fundamental domain ${\cal F}_{14}$ choosing the reference as $\{\, (2,4), (1,4), (1,5)\, \}$ and deleting the red diagonal $(1,4)$ in $X_{14}\, <<\, m^{2}$ limit results in a two dimensional accordiohedron with four vertices.
\end{itemize}
\begin{lemma}
Given a fundamental domain ${\cal F}_{ij}$ with $(p,q)$ co-ordinatizing one of the red vertices,  consider the projected accordiohedron ${\cal PAC}_{n-4}[(p,q)]$ obtained by analysing the associahdron block $A_{n-3}^{{\cal F}_{ij}}$ from $X_{pq}\, <<\, m^{2}$ region. We claim that 
\begin{flalign}
{\cal PAC}_{n-4}[(p,q)]\, =\, {\cal AC}_{n-4}(D_{p,q})
\end{flalign}
where ${\cal D}_{p,q}$ is defined in equation \eqref{dmneqn}.
\end{lemma}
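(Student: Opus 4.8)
The plan is to reduce the statement to a direct computation with the ABHY-type constraints that realise the block, carried out in the \emph{particular} realisation whose reference triangulation deletes to $D_{p,q}$. Concretely, I would first choose the reference triangulation $T$ of the fundamental domain $\mathcal{F}_{ij}$ so that $(p,q)\in T$ is a red diagonal and $T\setminus\{(p,q)\}=D_{p,q}$. This is always possible because, by \eqref{dmneqn}, $D_{p,q}$ is a dissection of the $n$-gon into $n-4$ triangles together with the single quadrilateral whose diagonal is $(p,q)$, so re-inserting $(p,q)$ yields a genuine triangulation $T$ which (given the structure of $\mathcal{S}_{ij}$) is an admissible reference for $\mathcal{F}_{ij}$ with $(p,q)$ red. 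Since all references realise the \emph{same} associahedron block and hence the same combinatorial $\mathcal{PAC}_{n-4}[(p,q)]$, it suffices to establish the claim in this convenient realisation; this is precisely the device that repairs the naive guess $T_{ij}/(p,q)$ ruled out earlier.

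With $T$ fixed, I would write the block constraints $\mathbf{s}_{ab}=-c_{ab}$ of section \ref{mrccd} in the $X_{ab}$ variables, with all $c_{ab}>m^{2}$ so that the slice $0\le X_{pq}\ll m^{2}$ is accessible, exactly as in the $n=5$ and $n=6$ worked examples of section \ref{sfeft}. In the limit $X_{pq}\ll m^{2}$ the constraints organise into three groups: those that become $m^{2}$-dominated and freeze certain planar variables to positive constants of order $m^{2}$, thereby deleting their poles from this region; those rendered inaccessible by positivity of the remaining $X_{ab}$; and the remaining $n-4$ constraints, which close up among themselves into an ABHY-type system $\sum X=\bar{c}$ with each $\bar{c}>0$. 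This last system cuts out an $(n-4)$-dimensional convex polytope, which is by definition $\mathcal{PAC}_{n-4}[(p,q)]$.

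The heart of the proof is to identify this reduced system with the convex realisation of $\mathcal{AC}_{n-4}(D_{p,q})$ of \cite{pilaud-1702}. Here I would use that deleting $(p,q)$ from $T$ collapses the two triangles adjacent to $(p,q)$ into the quadrilateral of $D_{p,q}$, so that the mutation structure surviving the limit is precisely the mutation structure of the dissections of the $n$-gon compatible with $D_{p,q}$: the poles $X_{ab}=0$ that are \emph{not} frozen are exactly the diagonals compatible (in the accordion sense) with $D_{p,q}$, and the surviving linear equations are exactly those defining the realisation of \cite{pilaud-1702} for the reference $D_{p,q}$. Matching facets and adjacencies and invoking uniqueness of the canonical form on a simple polytope \cite{nima1703} then forces the two polytopes to coincide, giving $\mathcal{PAC}_{n-4}[(p,q)]=\mathcal{AC}_{n-4}(D_{p,q})$.

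I expect the main obstacle to be the combinatorial bookkeeping in this last step for general $(n,p,q)$: one must verify, without recourse to examples, that the set of poles surviving the $m^{2}$-domination coincides facet-for-facet with the set of diagonals compatible with $D_{p,q}$, and that no spurious pole is either retained or erroneously removed. The cleanest route is the causal-diamond characterisation of compatibility recalled in section \ref{brcd}: the diagonals frozen by the $m^{2}$-dominated equations are exactly those intersecting $(p,q)$, i.e.\ lying in the future or past light cone of $X_{pq}$, while those surviving lie outside that light cone, which is precisely the accordion-compatibility condition relative to the quadrilateral of $D_{p,q}$. Establishing this light-cone/compatibility dictionary in full generality, together with positivity of all the reduced constants $\bar{c}$ (which is where the bound $c_{ab}>m^{2}$ enters), is the one genuinely non-trivial ingredient.
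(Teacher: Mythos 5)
Your proposal is correct and is essentially the paper's own argument: both reduce the lemma to the single combinatorial fact that the diagonals accordion-compatible with ${\cal D}_{p,q}$ are precisely those that do not cross $(p,q)$, which the paper establishes by inspection of the double-fan structure of ${\cal D}_{p,q}$ (its ``moment of meditation'' plus figure) and which you phrase as the light-cone/compatibility dictionary --- the same statement in the causal-diamond language. The remaining differences are cosmetic: you take the reference triangulation to be ${\cal D}_{p,q}\cup\{(p,q)\}$, which is exactly what the paper's verification examples use (its formal proof instead starts from the fan triangulation $T_{ij}$ and a replacement formula producing ${\cal D}_{ij}$), and you add an explicit matching of convex realisations via uniqueness of the canonical form, a step the paper leaves implicit.
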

{\bf Proof} : The proof is simply by inspection. We first analyze $(p,q)\, =\, (i,j)$ case. Consider the initial quiver in ${\cal F}_{ij}$ which is the reference triangulation $T_{ij}$ (generating the ABHY equations).
\begin{flalign}
T_{ij}\, =\, \{X_{i,i-1},\, \dots,\, X_{i,i+1}\, \}
\end{flalign}
Where we span the vertex label $i-1$ to $i+1$ counter-clockwise from $i$ and traverse through $X_{ij}$. ${\cal D}_{ij}$ is simply obtained from $T_{ij}$ by,
\begin{flalign}
D_{ij}\, =\, \cup_{m=j}^{i+1}\, T_{ij}\backslash(i,m)\, \cup\, (i+1, m)
\end{flalign}
A moment of meditation reveals that the all the diagonals which are not transversely intersect  $(i,j)$ diagonal are necessarily not compatible with the dissection ${\cal D}_{ij}$, (see figure \ref{lemmaproof}). This completes the proof.
\begin{figure}
    \centering
    \includegraphics[scale=0.35]{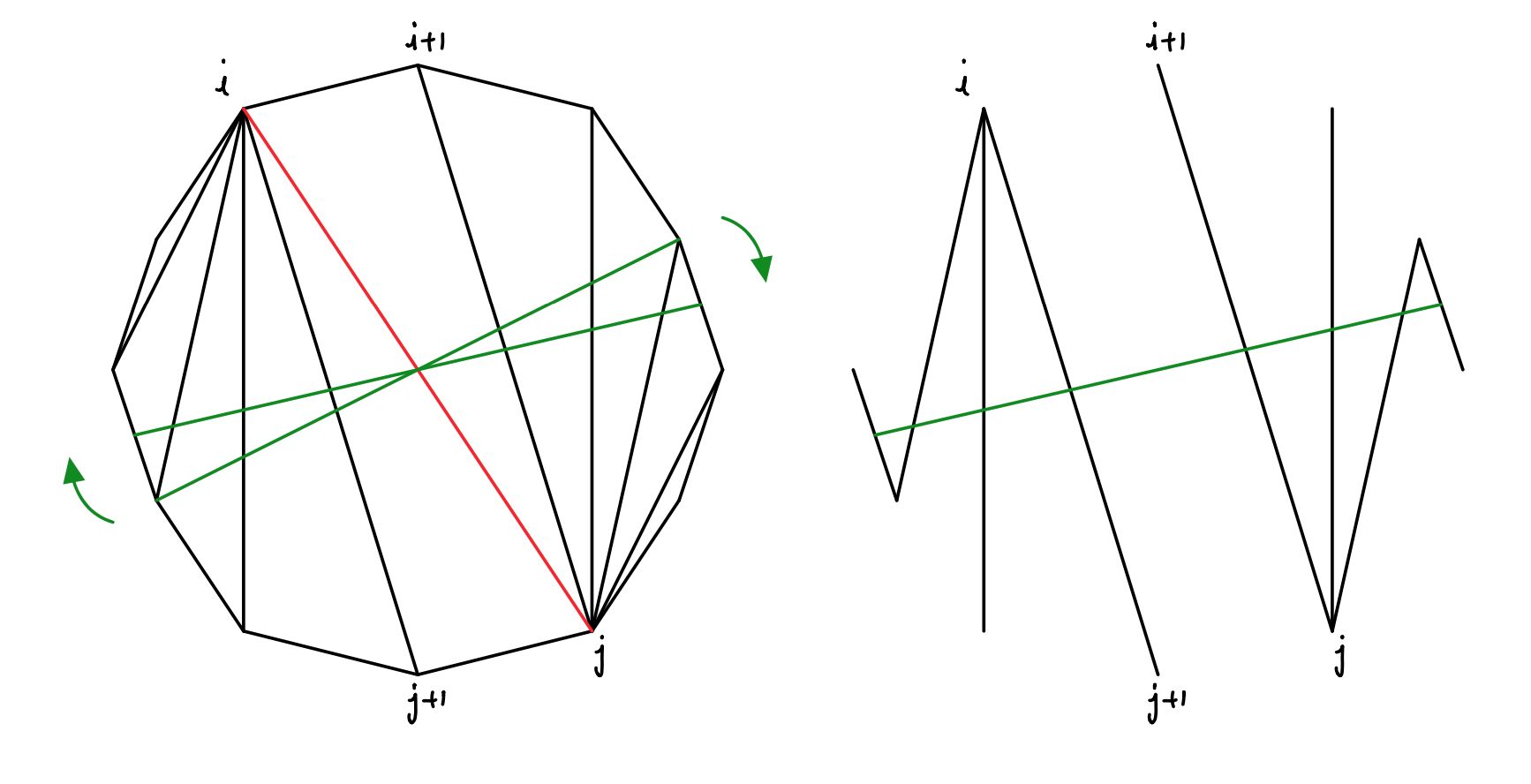}
    \caption{A diagonal intersecting $(ij)$ is not compatible with the dissection ${\cal D}_{ij}$}
    \label{lemmaproof}
\end{figure}

Similarly  ${\cal PAC}_{n-4}(p, q)$ (where $(p,q)\, =\, (i+k,j+k)$) obtained by projecting onto the face parallel to $\tilde{X}_{i+k,j+k}\, =\, 0$ is
isomorphic to the accoriohedron ${\cal AC}_{n-4}(D_{i+k,j+k})$ where $D_{i+k,j+k}$ is obtained by rigidly translating the quiver associated to $D_{ij}$ or equivalently rotating the dissection $D_{ij}$. 
Finally, the complete colour-ordered tree-level amplitude can be computed by using weighted sum over accordiohedra as in \cite{mrunmay3}. This result is simply based on the claim proved in the previous section. Thanks to the identification established through equation \eqref{dmneqn}, the planar scattering form $\Omega_{n-4}$ determined by ${\cal PAC}_{n-4}[(p,q)]$ equals the planar scattering form defined by ${\cal AC}_{n-4}({\cal D}_{p,q})$. Hence we have the following. 
\begin{flalign}\label{eftcf}
\Omega_{\textrm{EFT}}^{g^{2}}\, =\, \sum_{{\cal D}_{p,q}}\, \alpha_{[{\cal D}_{p,q}]}\, \Omega^{{\cal D}_{p,q}}_{n-4}\vert_{{\cal AC}_{n-4}({\cal D}_{p,q})}
\end{flalign}
where  $[{\cal D}]$ was introduced in section \ref{cfa}. It  is the equivalence class of (reference) dissection defined as the orbit of ${\cal D}$ under the dihedral group $D_{n}$.\\  
$\alpha_{[{\cal D}_{p,q}]}$ are determined in Appendix \ref{OW}.
\section{Discussion and Open Issues}\label{doi}
{\bf Perturbative expansion at higher orders}\\
A rather natural question that arises out of our analysis is the folliwng :  Can similar ideas be used to compute amplitude for $\lambda_{1}\, \phi_{1}^{3}\, +\, \lambda_{2}\, \phi_{1}^{2}\phi_{2}$ theory to arbitrary orders in $\lambda_{2}$, that is including amplitude contributions that have (at least) two massive and remaining massless poles. We can see that there is an immediate limitation to generalising the structure of coloured causal diamonds to define polytopes which includes such  contributions. Given any two diagonals $(ij),\, (ik)$ in an $n$-gon, there is always a complete triangulation that\\
{\bf (a)} includes $(ij),\, (ik)$  and,\\ 
{\bf (b)} the triangulation is such that these two diagonals are adjacent to each other.

Thus, irrespective of what mutation rules we define, (so as to define a fundamental domain which has an ``initial dissection" that is associated to a diagram with two massive propagators in our theory) this domain will always include a triangulation which is dual to the a Feynman diagram in  scalar theory with $\phi_{1}\, \phi_{2}^{2}$ vertex. Hence  positive geometries for higher order perturbative expansion of scattering amplitude in $\lambda_{1}\, \phi_{1}^{3}\, +\, \lambda_{2}\, \phi_{1}^{2}\, \phi_{2}$ theory remain unknown.  Based on simple examples (such as $n\, =\, 6$ case) we do expect the positive geometries to be open polytopes such as those discovered in \cite{songmatter}. However a detailed analysis of such ``open associahedron blocks" is outside the scope of this work.

\hspace*{-0.3in}{\bf Amplitude for most general cubic interactions}\\
Yet another related question which we do not answer in this paper is the following: Can the perturbative expansion of (planar and tree-level) scattering amplitude generated by the most general cubic interaction involving a massless and a massive scalar field be obtained from positive geometries located in positive regime of ${\cal K}_{n}$. We suspect that the answer is yes: By expanding the definition of fundamental domain to include ``un-coloured" triangulations more than once, we do end up including vertices of the associahedra that are dual to poles with the generic interactions. 

\hspace*{-0.3in}{\bf Relationship with CHY Formulation} :\\
One of the most striking consequences of the discovery of the kinematic space associahedron in \cite{abhy1711} was a derivation of the CHY formula for bi-adjoint $\phi^{3}$ theory. In \cite{abhy1711} it was shown that the compactification of the (real section) of the CHY moduli space, namely ${\bf M}_{0,n}({\bf R})$ was diffeomorphic to the kinematic space associahedron where the diffeomorphisms were defined by the scattering equations. It is thus a natural question to ask if the convex realisation of  associahedron blocks  are  diffeomorphic to the CHY moduli space with diffeomorphism defined by the scattering equations corresponding to mixed scalar interactions.  A proposal for this class of scattering equations is given in \cite{lam2005}.

In \cite{lam2005} a proposal was given for generalisation of CHY scattering equations to theories with several species of scalar fields. The numerator in the CHY scattering equations which depend on Mandelstam invariants are deformed based on whether the corresponding channel has a massive pole or a massless pole. Lam has given the CHY formula for $\phi_{1}^{2}\, \phi_{2}$ interaction  A detailed analysis of the relationship between CHY integrand for bi-scalar theory and weighted sum over d log forms is outside the scope of the paper and will be attempted elsewhere but we are tempted to offer some speculation.

If the (generalised) scattering equations proposed in \cite{lam2005} do turn out to be the diffeomorphisms which map the (compactified) CHY moduli space to the associahedron block then (1) this will imply that the weighted sum of  canonical forms on the   associahedron blocks is a pushforward of a CHY integrand and (2) the inverse of diffeomorphism between CHY moduli space and ABHY associahedron can be composed with the diffeomorphism from CHY moduli space to associahedron block to have a diffeomorphism between massless $\phi^{3}$ amplitude and perturbative amplitude (up to $\lambda_{2}^{\, 2}$) in two-scalar field theory. At this stage this is a pure speculation but if true it may give rise to interesting inter-relationships between tree-level amplitudes of physically inequivalent theories.

\hspace*{-0.3in}{\bf Perturbative Amplitude at All Orders}:\\
In this paper, we focused solely on tree level colour-ordered amplitude at order $\lambda_{2}^{\, 2}$. But our eventual goal is to find the positive geometry in kinematic space whose canonical form generate amplitude at any order in the coupling.  However there is a clear structural difference between  amplitude at order $\lambda_{2}^{\, 2}$ from higher order contributions. As we argued earlier, the polytope with two or more adjacent red facets are bound to be open if we fix the interaction between the two fields to be $\phi_{1}^{2}\phi_{2}$. Whether the causal structure in kinematic space encodes combinatorics and realisations of such polytopes remains to be seen.

This question is intimately tied to seeing how accordiohedra for perturbative amplitudes in EFT at higher order in $g$ emerge in the decoupling limit from positive geometry of the bi-scalar field theory. 

\hspace*{-0.3in}{\bf Coloured Causal diamonds and Positive geometries for several fields}:\\
Finally it would be interesting to see if the ideas proposed in this paper can be generalised suitably to find positive geometry for tree-level amplitudes of several fields.\footnote{We are grateful to Nima Arkani-Hamed for posing this question.} As ``friends" of associahedra with various possible colorings of the facets remain to be found, we hope that a colored causal structure in the kinematic space can be used to define such polytopes.

\hspace*{-0.3in}{\bf Local QFTS and Positive Geometries}\\
Our analysis centered around a specific mixed cubic interaction between a massless and a massive field and we computed tree-level amplitudes with only massless external particles.  It is an important question to analyse the correspondence between positive geometries in kinematic space and local unitary  scalar scattering amplitudes. That is, how do we classify all the local multi-scalar interactions for which \emph{any} scattering amplitude equals sum over $d\log$ forms associated to positive geometries. We should note that computation of generic bi-adjoint amplitude with a single massless scalar or the amplitudes in the Yukawa type interaction analysed in \cite{songmatter} require us to consider so-called open polytopes in which some of the facets lie at infinity in the kinematic space. Hence our organisation principle should allow for open associahedra (or more generally open polytopes) as possible positive geometries for scattering amplitudes.

We hope to return to at least some of these questions in the future.
\appendix
\section{Explicit formula for the weights}\label{OW}
In this appendix, we derive an explicit formula for the weights $\alpha_{[{\cal D}_{p,q}]}$ that appeared in equation \eqref{eftcf}. We then prove that  if a combinatorial accordiohedron ${\cal AC}({\cal D})$ of dimension $n-4$ is not (combinatorially) equivalent to one of the co-dimension one facet of $A_{n-3}$ then the weight $\alpha_{[{\cal D}]}$ is zero.

The explicit formula for the weights when the accordiohedron polytope is generated a reference dissection consists of $n\, -\, 4$  triangles and one quadrilateral is derived as follows. 
\begin{lemma}
Consider an accordiohedron whose reference dissection consists of $n - 4$ triangles and one quadrilateral.  If the accordiohedron is a facet of associahedron associated to one partial triangulation obtained by removing $(i,j)$ then,
\begin{flalign}
\begin{array}{lll}
\alpha_{[{\cal D}_{i,j}]}\, =\, \frac{1}{2 C_{k-2}\, C_{n-k-2}}\, \textrm{if}\ \vert\, j - i\vert\, =\, k\, \textrm{modulo}\, n
\end{array}
\end{flalign}
\end{lemma}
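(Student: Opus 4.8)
The plan is to read off $\alpha_{[\mathcal{D}_{i,j}]}$ from the residue-one system of section~\ref{asdln}, namely $\sum_{[D_j]}\alpha_{[D_j]}\,N([D_j],[D_i])=1$, specialised to the family $\vb{v}=(n-4,1,0,\dots)$ of dissections carrying a single quadrilateral. The first ingredient is the companion vanishing statement of this appendix: any $\mathcal{AC}(\mathcal{D})$ that is not combinatorially a facet of $A_{n-3}$ has $\alpha_{[\mathcal{D}]}=0$, so only facet-references survive in the sum. By the factorisation of section~\ref{fmc}, each such facet is the product $A_L\times A_R$ obtained by deleting the diagonal $(i,j)$, with $A_L,A_R$ the associahedra of the two sub-polygons cut off by $(i,j)$. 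I would record at the outset that the reference $\mathcal{D}_{i,j}$ of equation~\eqref{dmneqn} carries a quadrilateral whose two diagonals both have length $k=|i-j|\ \mathrm{mod}\ n$; I will call a maximal dissection with this property a \emph{pure length-$k$} channel.

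Next I would isolate a single equation of the system by evaluating the residue condition at a pure length-$k$ maximal dissection $D^{*}$ (for instance $D^{*}=\mathcal{D}_{i,j}$ itself). The only diagonals that complete the quadrilateral of $D^{*}$ to a full triangulation are its two quad-diagonals, both of length $k$, so $D^{*}$ lies exactly on the length-$k$ facets. Every facet-reference whose accordiohedron contains $D^{*}$ therefore has length $k$, the equation decouples from all other lengths, and it reads $\alpha_{[\mathcal{D}_{i,j}]}\,N=1$ with $N=N([\mathcal{D}_{i,j}],[D^{*}])$.

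The heart of the argument is the evaluation of $N$. Using $N([D_i],[D_j])=\tfrac{|G_{D_j}|}{|G_{D_i}|}M([D_j],[D_i])$ and the fact that $[\mathcal{D}_{i,j}]$ and $[D^{*}]$ have equal stabiliser orders, $N$ collapses to $M$, i.e.\ the number of pure length-$k$ vertices of the facet $A_L\times A_R$. I would count these by factorising across $(i,j)$: such a vertex is fixed by a second length-$k$ diagonal $(x,y)$ crossing $(i,j)$ (this bounds the quadrilateral) together with an independent triangulation of each region the quadrilateral leaves behind. The two independent triangulations furnish the Catalan factors $C_{k-2}$ and $C_{n-k-2}$ (triangulation counts of the residual $k$- and $(n-k)$-gons), while the overall factor $2$ records the two diagonals of the quadrilateral, equivalently the two crossing length-$k$ diagonals spanning the same quad. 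This yields $N=2\,C_{k-2}\,C_{n-k-2}$ and hence $\alpha_{[\mathcal{D}_{i,j}]}=\tfrac{1}{2\,C_{k-2}\,C_{n-k-2}}$. I would close by checking the remaining equations, those indexed by \emph{mixed} channels whose quadrilateral has diagonals of two distinct lengths $k_1\neq k_2$: such a channel sits on one length-$k_1$ and one length-$k_2$ facet, and one must confirm that the two weighted contributions add to one (e.g.\ for $n=6$ the mixed channel receives $\tfrac14\cdot 2+\tfrac12\cdot 1=1$).

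The step I expect to be the main obstacle is precisely the count $N=2\,C_{k-2}\,C_{n-k-2}$. Two points need genuine care. First, establishing the clean factorisation of the pure vertices into the two sub-polygon triangulations with exactly the stated Catalan indices and a single factor of $2$, and correctly handling the self-dual case $k=n/2$ where the two crossing diagonals coincide in length-count so the factor of $2$ is absorbed rather than doubling. Second, justifying the decoupling of the second step when the pure length-$k$ channels split into several dihedral orbits (as already happens for $n=8$, $k=3$), so that one must verify every such orbit carries the same weight $\tfrac{1}{2\,C_{k-2}\,C_{n-k-2}}$ and not merely the distinguished orbit of $\mathcal{D}_{i,j}$. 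Checking the low cases $n=5,6,7$ explicitly, where the formula returns $2C_0C_1=2$, $2C_0C_2=4$ and $2C_1C_1=2$ in agreement with a direct count of channels, gives a reliable consistency test before attempting the general combinatorial identity.
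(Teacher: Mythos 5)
Your final formula is the paper's, and several of your structural ingredients are sound (setting non-facet weights to zero, the decoupling of lengths at a pure channel, the factor of two tied to the two diagonals of a quadrilateral, the $\tfrac12+\tfrac12$ check on mixed channels). But the step you yourself flag as the heart of the argument --- the evaluation of $N$ --- computes the wrong quantity, and the count you give for it is false. You identify $N$ with ``the number of pure length-$k$ vertices of the facet $A_L\times A_R$'' and evaluate it as $2C_{k-2}C_{n-k-2}$ on the grounds that there are exactly two length-$k$ diagonals crossing $(i,j)$, each bounding a quadrilateral that leaves behind a $k$-gon and an $(n{-}k)$-gon. Both claims already fail at $n=7$, $k=3$: the diagonal $(1,4)$ is crossed by \emph{four} length-$3$ diagonals, $(2,5)$, $(2,6)$, $(3,6)$, $(3,7)$; the quads $(1,2,4,6)$ and $(1,3,4,6)$ leave behind three triangles, not a $3$-gon and a $4$-gon; and the facet at $(1,4)$ has \emph{six} pure vertices (two each from the quads $(1,2,4,5)$ and $(1,3,4,7)$, one each from the other two), not $2C_1C_2=4$. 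Moreover, six is not even the number the residue equation needs: what multiplies the weight is the number of length-$k$ facet-\emph{references} whose accordiohedra contain the fixed channel $D^{*}$, and that number is $4$. The loss occurs in your passage $N\to M$: $M([D^{*}],[\mathcal{D}_{i,j}])$ counts only those vertices of the facet lying in the \emph{dihedral orbit} of $D^{*}$, and for $n=7$, $k=3$ the six pure vertices split $4+2$ between two distinct orbits; dropping the orbit restriction turns a correct identity into an incorrect count. (Your numerical checks could not catch this because your ``$n=7$'' value $2C_1C_1=2$ is in fact the $n=6$, $k=3$ number; the true $n=7$, $k=3$ value is $2C_1C_2=4$.)

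The paper's proof never counts vertices inside a facet, which is what makes it clean. It counts, for each length-$k$ diagonal $(i,j)$, the reference dissections whose accordiohedron \emph{is} the facet of $A_{n-3}$ at $(i,j)$: up to the orientation convention in the compatibility rule, these are the quad $(i,i+1,j,j+1)$ completed by arbitrary triangulations of the two residual sub-polygons, hence $C_{k-2}C_{n-k-2}$ of them, and every one of them contains any given channel whose quad has $(i,j)$ as a diagonal. Since every channel's quadrilateral has exactly two diagonals, assigning $\alpha=\frac{1}{2C_{k-2}C_{n-k-2}}$ per facet-reference makes every residue, pure or mixed, equal to $\tfrac12+\tfrac12=1$, with no orbit bookkeeping at all. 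To salvage your route you would need to (i) replace ``pure vertices of the facet'' by ``facet-references containing $D^{*}$'' and count those (the two diagonals of $D^{*}$'s quad, times $C_{k-2}C_{n-k-2}$ references each), and (ii) confront the orbit problem directly: already for $n=8$, $k=3$ the forty length-$3$ facet-references cannot form a single $D_8$-orbit (since $\lvert D_8\rvert=16$), so the single decoupled equation $\alpha N=1$ does not exist, and equality of the weights across these orbits is an input --- the symmetric ansatz the paper effectively makes --- rather than an output of one residue condition.
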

where $C_{l}$ is the Catalan Number defined as
\begin{flalign}
C_{l}\, =\, \frac{(2l)\, !}{l\, !\, (l+1)\, !}
\end{flalign}
Note that if $\vert i - j\vert\, =\, 2\, \textrm{modulo}\, n$ then the above formula reduces to,
\begin{flalign}
\alpha_{[{\cal D}_{i,j}]}\, =\, \frac{1}{2\, C_{n-4}}\, \end{flalign}
\begin{proof}
Let $(\partial A)_{n-4}(i,j)$ be one of the facets  of $A_{n-3}$ that corresponds to one-partial triangulation obtained by removing the diagonal $(i,j)$. 
$(\partial A)_{n-4}(i,j)$ corresponds to a set of $\{\, {\cal AC}(B^{\prime}_{ij})\, \}$ where $B^{\prime}_{ij}$ is the set of all dissections which lead to the same combinatorial accordiohedron.\\
Now let $\vert j - i\vert\, =\, 2\, \textrm{modulo}\, n$. In this case the reader can convince themselves easily that all the triangulations of the $n-1$ gon obtained by chopping off the triangle $i,\, i + 1\, j$ lead to the same accordiohedron. Hence the number of such accordiohedra is $C_{n-4}$. Similarly if $(\partial A)_{n-4}(i,j)\, =\, A_{r}\, \times\, A_{n-r-4}$ then the total number of dissections which lead to the this facet as an accordiohedron is $C_{r}\, \times\, C_{n-r-4}$.\\
We also note that each reference dissection $B^{\prime}_{ij}$ leads to an  accordiohedron which is one of two possible facets :  $(\partial A)_{n-4}(i,j)$ and either $(\partial A)_{n-4}(i + 1, i\, + 1)$ or  $(\partial A)_{n-4}(i - 1, i\, - 1)$ and hence weight that we assign to each boundary accordiohedron for a given $(i,j)$ is $\frac{1}{2\, C_{r}\, C_{n-r-4}}$. This completes the proof.
\end{proof}


\section*{Acknowledgement}
We are indebted to Nima Arkani-Hamed for a number of insightful discussions, many clarifications and encouragement.\\ 
We would like to thank  Ashoke Sen and Nemani Suryanarayana for valuable inputs and Sujay Ashok, Pinaki Banerjee, Miguel Campiglia, Dileep Jatkar, Nikhil Kalyanapuram, Madhusudan Raman, Prashanth Raman and Arnab Priya Saha for many  discussions over the years on related issues. We also thank Pinaki Banerjee for comments on the manuscript.\\
We would especially like to thank Vincent Pilaud for his guidance and crucial insights in the early stages of this work.


\end{document}